\newcommand{\tp}[2]{\left((#1),#2\right)}
\newcommand{\comment}[1]{}
\newcommand{\abs}[1]{\vert #1 \vert}
\newcommand{\sqk}{\sqrt{\ln(k)}}
\newcommand{\lln}{\ln(\ln(k))}
\renewcommand{\H}{\mathcal{H}}
\newcommand{\E}{\mathbf{E}}
\renewcommand{\D}{\mathcal{D}}
\newcommand{\Prob}{\mathbf{P}}
\newcommand{\cH}{\lceil \mathcal{H}_k \rceil}
\newcommand{\ncH}{ \mathcal{H}_k }
\newcommand{\R}{{\bf R}}
\newcommand{\Var}{\mathbf{Var}}
\newcommand{\core}{\mathrm{Core}}
\newcommand{\cb}{\mathrm{CoreRev}}
\newcommand{\vcgr}{\mathrm{VcgRev}}
\newcommand{\mv}{\mathrm{MV}}
\title{Core-competitive Auctions}
\date{}
\author{GAGAN GOEL\thanks{Google Research}\quad 
	MOHAMMAD REZA KHANI\thanks{University of Maryland}\quad
	RENATO PAES LEME\thanks{Google Research}
}
\begin{document}

\maketitle

\begin{abstract}

One of the major drawbacks of the celebrated VCG auction is its low (or zero) revenue even when the agents have
high value for the goods and a {\em competitive} outcome would have generated a
significant revenue. A competitive outcome is one for which it is impossible for
the seller and a subset of buyers to `block' the auction by defecting and
negotiating an outcome with higher payoffs for themselves. This corresponds to
the well-known concept of {\em core} in cooperative game theory.

In particular, VCG revenue is known to be not competitive when the goods being
sold have {\em complementarities}. Complementary goods are present in many
application domains including spectrum, procurement, and ad auctions. The
absence of good revenue from VCG auction poses a real hurdle when trying to
design auctions for these settings.  Given the importance of these application
domains, researchers have looked for alternate auction designs. One important research direction that has come from this line of thinking is that of the
design of {\em core-selecting auctions} (See  Ausubel and Milgrom, Day
and Milgrom, Day and Cramton, Ausubel and Baranov).
Core-selecting auctions are combinatorial auctions whose outcome
implements competitive prices even when the goods are complements. While these
auction designs have been implemented in practice in various scenarios and are
known for having good revenue properties, they lack the desired
incentive-compatibility property of the VCG auction. A bottleneck here is an
impossibility result showing that there is no auction that simultaneously
achieves competitive prices (a core outcome) and incentive-compatibility.

In this paper we try to overcome the above impossibility result by asking the
following natural question: is it possible to design an incentive-compatible
auction whose revenue is comparable (even if less) to a competitive
outcome? Towards this, we define a notion of {\em core-competitive} auctions. We
say that an incentive-compatible auction is $\alpha$-core-competitive if its
revenue is at least $1/\alpha$ fraction of the minimum revenue of a
core-outcome. We study one of the most commonly occurring setting in
Internet advertisement with complementary goods, namely that of the
Text-and-Image setting. In this setting, there is an ad slot which can be filled
with either a single image ad or $k$ text ads. We design an $O(\ln \ln k)$
core-competitive randomized auction and an $O(\sqk)$ competitive deterministic
auction for the Text-and-Image setting. We also show that both factors are
tight.

\comment{

, even when agents have high value for the goods and the market is {\em competitive}. In particular, this issue arises when goods are {\em complements}. In such settings VCG revenue can be arbitrarly low when compared to a competitive outcome with least revenue; A competitive outcome is one for which it is impossible for the seller and a subset of buyers to `block' the auction by defecting and negotiating an outcome with higher payoffs for themselves. This corresponds to the well-known concept of {\em core} in cooperative game theory.

 its practical of those auction settings and the clear necessity of good
revenue properties in those, it is not surprising that a major research direction
in Economics has been the design of {\em core-selecting package auctions}
(Ausubel and Milgrom; Day and Milgrom; Day and Cramton; Ausubel and Baranov),
which are combinatorial auctions whose outcome implements competitive prices even
if goods are complements. While those auction formats have been implemented in
practice in various scenarios and are known to guarantee good revenue properties,
those lack the incentive-compatibility property of the VCG auction.

In this paper we ask: is it possible to guarantee revenue comparable to the one
of competitive outcomes while keeping incentive compatibility? We ask this
question in the context of most common occurency of complement goods in Internet
Advertisement -- the Text-and-Image setting. In such setting, the
bidders either have a text ad (which occupies one ad slot) or an image ad (which
occupies all the slot). For this setting we design a $O(\ln \ln k)$ core
competitive randomized auction and a $O(\sqk)$ competitive deterministic
auction, where $k$ is the number of slots. Moreover, we show that both factors are tight.

}

\comment{
\rplcomment{Old abstract below}

The central question we ask in this paper is: "How to design incentive-compatible mechanisms that give good revenue for scenarios where the celebrated VCG auction performs badly?". This question is getting more and more highlighted as the Internet ad industry is moving to complex allocation constraints. A little digging into this question and one realizes that an even more basic question is that of defining a good revenue benchmark against which to compare any mechanism.

 Our first contribution of this work is in defining a new benchmark that has been studied in the literature as "Core Auctions". Core auctions are not truthful in general (in fact, exactly in settings where the VCG may give poor revenue), thus we define "Core-competitive auctions" to be auctions that are truthful but give competitive revenue when compared to a core auction.

 In this talk, I will define our benchmark formally, give its connection to the existing literature (in particular, to frugal auctions), and show our preliminary results on designing core-competitive auctions for scenarios motivated by Internet ad auctions.
 }
\end{abstract}


\section{Introduction}

\comment{
\begin{quotation}
Despite the enthusiasm that the Vickrey mechanism and its extensions generate
among economists, practical applications of Vickrey’s design are rare at best. 
\end{quotation}

Above are the quotes from a paper titled {\em “The lovely but the lonely Vickrey auction” } written by two notable economists Ausubel and Milgrom \cite{AM06}. This forms the starting point of our work as well as we seek to design auctions for Internet advertisement where we encounter bidders with complementarities -- a setting where Vicrey and its generalizations (VCG) are known to be ill-suited from a practical auction design perspective. \\
}

The VCG mechanism is a powerful mechanism that achieves an efficient
outcome in an incentive compatible manner for a variety of scenarios. The simplicity of the VCG mechanism raised our hopes of wide
application of this elegant theory in practice. However, it has been noted
in the recent past that the applicability of VCG auction beyond the simple 
case of multiple homogeneous goods has remained limited.
Ausubel and Milgrom \cite{AM02} offer an explanation of why VCG in its
purest form is often unsuitable to be used in practice. They write:

\begin{quotation}
[...] \comment{In auctions of public
assets,} higher revenues also improve efficiency, since auction revenues can displace
distortionary tax revenues. [...]
Probably the most important disadvantage of the Vickrey auction is that the revenues
it yields can be very low or zero, even when the items being sold are quite valuable.
\end{quotation}

\comment{One primary reason for the limited use of the general VCG auction in practice is
that VCG gives low (or zero) revenue even when the items being sold are quite
valuable and there is sufficient competition in the market.}

To illustrate this point of low or zero revenue, consider the following example
from spectrum auctions (taken from \cite{AM02,ausubel2006lovely}): consider 3
bidders who are participating in an auction for two spectrum licenses: the first
bidder is willing to pay 2 billion for the package of 2 licenses while each of
the other two bidders is willing to pay 2 billion for any individual license. The
VCG outcome allocates to the second and third bidder, and charges a payment of zero to each of them. This is because the externality each winning bidder imposes on the rest of the bidders is
zero. Note that, one can hardly blame the lack of revenue to the absence of competition; if one were
to treat it as a market equilibrium problem and compute market clearing prices
(say by means of a tatonnement procedure), the revenue would be non-trivial.

Thus, one natural question to ask is, for an auction outcome, how to formally say
that it achieves {\em a competitive revenue}? To answer this,
\cite{AM02} introduced the notion of a {\em core} outcome in an auction setting.
The notion of core is a fundamental and well-known notion in cooperative game
theory and represents a way to share the utility produced by a group of players
in a manner that no sub-group of players would want to deviate. In an auction
setting, a set of winning buyers and their payments are said to be a core
outcome if no sub-group of losing bidders can propose to the auctioneer (seller)
an alternative higher-revenue outcome. For example, in the license example, the
outcome implemented by VCG is not in the core since the first bidder (who wanted
to purchase two licenses) could negotiate with the auctioneer that the licenses
should be allocated to him for any price larger than zero.  On the other hand,
the outcome which allocates one license each to players 2 and 3 and charges each of
them 1 billion is in the core, since in this case there is no alternative outcome that the
first player can propose to the auctioneer which would be beneficial for both.

It is noteworthy that when the goods are {\em substitute}, the VCG outcome is a
core outcome, and VCG revenue equals the core-outcome with the minimum revenue
(the set of core outcomes is not unique) \cite{AM02}. However, if the goods are not
substitutes, the VCG outcome may lie outside the core.
In fact, as shown in the above example, VCG revenue can be arbitrarily lower when
compared to the minimum-revenue core outcome.

So can one design incentive-compatible auctions whose outcome is always in the
core? Unfortunately, one can show that it is impossible to design an auction
that (a) achieves a core outcome, and (b) has truth-telling as a dominant
strategy equilibrium. So we must either relax (a) or (b). In Ausubel and Milgrom
\cite{AM02}, the authors relax (b), and give a family of ascending package
auctions (called {\em core-selecting} auctions) which are not truthful but whose
equilibrium outcome is a core outcome. These auctions have been extremely
successful in practice -- variations of these were used in spectrum-license
auctions in the United Kingdom,  Netherlands, Denmark, Portugal, and Austria,
and in the auction of landing-slot rights in the three New York City airports.
See \cite{day2012quadratic} for a complete discussion.

The focus of this paper is on applications in Internet ad auctions (we will call
them ad auctions from now on). There are several ad auction scenarios which are
modeled as goods with complementarities. As a case study for our work, we use a
very common scenario in ad auctions which has complementarities, namely that of
Text-and-Image ad auction. In a Text-and-Image ad auction scenario an ad slot on
a page can either accommodate $k$ text ads (which are the traditional ads
displayed next to search results) or one large image-ad. Notice that the example
by Ausubel and Milgrom can be reproduced exactly in this setting by setting
$k=2$.

What auction should we use for the Text-and-Image setting? The core-selecting
auction of \cite{AM02} is not a good choice for this setting as the ascending
package auctions are interactive procedures in which bidders submit a sequence
of bids after provisional allocations and prices for the previous phase are
revealed; such designs often result in long and time-consuming procedures which
are justified for one-time spectrum auctions but unsuitable for Internet
advertisement\footnote{One can eliminate the interactive aspect of package
bidding auction by using a {\em proxy agent}, as Ausubel and Milgrom discuss in
Section 3.4 of \cite{AM02}. While this technique eliminates the communication
burden, it is not enough to achieve incentive-compatibility.}. Moreover, because of the fast-paced nature of online
advertisement, one cannot expect bidders to reach an equilibrium outcome for
each individual ad auction if the underlying auction is not a truthful one.

In this paper we investigate whether it is possible to design direct-revelation
incentive-compatible auctions whose revenue is competitive against a core outcome
(we call such auctions {\em core-competitive} auctions). In core-competitive
auction design, we seek to relax (a) instead of (b) above. More precisely, we
define {\em core revenue benchmark} as the smallest revenue among all the
core-outcomes. We say that an auction is $\alpha$-core-competitive if its
revenue is at least an $1/\alpha$ fraction of the core revenue benchmark.

We formally define the notion of core-competitiveness in section 2, and later we
focus on the design of core-competitive auctions for the Text-and-Image setting.
\comment{ which is at the same time the simplest and most-common in practice
settings in which complementarities are present in Internet Advertisement.} We
give a randomized universally-truthful mechanism which is $O(\ln \ln
k)$-core-competitive, where $k$ is the number of slots. We also give a lower
bound showing that this factor is tight. We note that in ad auction settings,
there are several repeated auctions with each auction generating only a small
revenue. For such settings, a seller care about the overall performance and
therefore randomized auctions are perfectly fine from a practical auction
design perspective. We also study deterministic auctions since for some settings
randomization may not be desired; for instance, for one time auctions like
spectrum auctions. We give a deterministic mechanism which is
$O(\sqk)-$core-competitive, and again show that this factor is tight
for deterministic mechanisms.

Finally, to the best of our knowledge, the notion of core-competitiveness has
not been studied before. It is our belief that developing tools and techniques
for designing core-competitive auctions, and understanding the possibilities and
limitations of such auctions, will be very useful from a practical auction
design perspective.

\subsection{Related Work}

The line of inquiry that seeks to design package auctions that implement
core outcomes in equilibrium was started by Ausubel and Milgrom \cite{AM02}.
This line has been further developed in
\cite{DM08, ausubel2010core, day2012quadratic, erdil2010new,
goeree2009impossibility,lamy2010core}. The authors design an iterative procedure
that asks bidders in each round for packages they want to bid on as well as bid
values for each of those packages. In each round a set of provisionally winning bids are
identified. This proceeds until no further bids are issued in a given round. Our
work differs from this line of work in the sense that we require
incentive-compatibility; in the core-selecting package auctions literature, the
focus is on implementing core outcomes {\em in equilibrium}.

Another stream of related work is the design of incentive compatible auctions
that tries to optimize for revenue in a prior-free setting. This research direction was initiated in
\cite{GoldbergHW01,FiatGHK02,GoldbergH03} and resulted in a sequence of followup results which are too large to survey here. We refer to Hartline's book 
\cite{hartline2013mechanism} for a comprehensive discussion. The first successful results gave
auctions for the digital goods that approximate the $\mathcal{F}^2$ revenue benchmark, the maximum revenue one can extract from at least two players using
fixed prices. More modern versions of this result
\cite{HartlineY11,HaH13,DevanurHH13} compare against the envy-free benchmark
(how much revenue it is possible to extract from an outcome where any two agents
wouldn't like to swap places). This resulted in success stories for a large
class of environments such as multi-units, matroids and permutation
environments. \comment{One salient feature of this line of work is that all environments under
consideration are {\em symmetric}, i.e., if a certain allocation is feasible,
then the same allocation with identities of the players permuted are also
feasible. Asymmetric environments, such as single minded combinatorial
auctions, can only be handled by ``symmetrizing'' the environment. This is done, for
example, by randomly permuting the identities of the agents.} Our work differs from the above line of work as we consider environments with complementarities, while the envy-free revenue literature mostly focused on environments with substitutes. In Section
\ref{subsec:comparison} we discuss in detail the relation between the envy-free
benchmark and the core-revenue benchmark and we argue that the core-revenue
benchmark captures some of the {\em no-envy} notions.

Closer to our line of inquiry is the work of
\cite{MicaliValiant} and \cite{aggarwal2006knapsack}. In \cite{MicaliValiant}, they design revenue extraction mechanisms for general combinatorial auctions where their benchmark is the maximum social welfare extractable from all except one player (the one with the top bid). They use randomization to obtain a mechanism with $O(\log n)$ approximation factor. They also give a matching lower bound of  $\Omega(\log n)$ for randomized mechanisms, and for deterministic mechanisms they give a lower bound of $\Omega(n)$. \cite{aggarwal2006knapsack} study knapsack auction where there are $k$ identical items and each bidder demands a certain number of them. Their benchmark is a version of envy-free pricing where a bidder has to pay at least as much as the bidders with lower demands \footnote{This is also called monotone benchmark, see also \cite{LR12} and \cite{BEJ+13} for its definition on digital goods auction.}. They get an approximation ratio of $\alpha \cdot \OPT - \lambda O(\log \log \log n)$ where $\OPT$ is the optimal envy-free revenue, $\alpha$ is a constant number and $\lambda$ equals to the highest valuation of any bidder. Although their approach is useful when $\lambda$ is much smaller than $\OPT$; it performs poorly when $\lambda$ is close to $\OPT$ which can be the case in the Image-and-Text auction.

We note that the revenue benchmarks of both the above papers are stronger than the core-benchmark. Thus, one might wonder if the mechanisms proposed in  \cite{MicaliValiant} and  \cite{aggarwal2006knapsack} perform better against the core benchmark? However, one can show that mechanisms given in both the above papers perform worse than our mechanism when compared to the core benchmark. The mechanism of \cite{aggarwal2006knapsack} can perform arbitrarily bad compared to the core benchmark, and the mechanism of \cite{MicaliValiant} still gets only $O(\log n)$ using randomization when compared to the core benchmark\footnote{We refer to Section \ref{subsec:comparison} for further discussion on this.}. In some sense, this suggests that a too strong benchmark that leads to large lower bounds in approximation ratio impedes the design of a good revenue-maximizing mechanism. We believe that the core benchmark is a more fundamental benchmark (as argued in series of papers starting with the work of \cite{ausubel2006lovely}), and as our work show, it looks amenable to a good multiplicative approximation ratio. 

\comment{
Our belief is that a strong benchmark that leads to large lower bounds in approximation ratio impedes the design of a good revenue-maximizing mechanism. The $\Omega(n)$ and $\Omega(\log(n))$ lower bounds of \cite{MicaliValiant} for deterministic and randomized mechanisms can also be applied to the Image-and-Text setting. Similarly, the additive loss in the approximation of \cite{aggarwal2006knapsack}'s benchmark is unavoidable even in the Image-text setting. One may think that the mechanisms proposed in  \cite{MicaliValiant} and  \cite{aggarwal2006knapsack} perhaps perform better against the core benchmark as it is a weaker benchmark. However, one can show that mechanisms given in both the above papers perform worse than our mechanism when compared to the core benchmark. The mechanism of \cite{aggarwal2006knapsack} can perform arbitrarily bad compared to the core benchmark, and \cite{MicaliValiant} still gets only $O(\log n)$ with the randomized mechanism. It is our belief that the core benchmark is a more fundamental benchmark (as argued in series of papers starting with the work of \cite{ausubel2006lovely}), and as our work show, it looks amenable to a good multiplicative approximation ratio. We also refer to Section
\ref{subsec:comparison} for further discussion on this.
}

\comment{
We also note that the mechanisms given in both the above papers perform worse than our mechanism when compared to the core benchmark. The mechanism of \cite{aggarwal2006knapsack} can perform arbitrarily bad compared to the core benchmark, and the mechanism of \cite{MicaliValiant} gets $O(\log n)$ using randomization. Moreover, we note that the revenue benchmarks of both the above papers are stronger than the core-benchmark. Our belief is that a strong benchmark that leads to large lower impedes the design of a good revenue-maximizing mechanism  As the $\Omega(n)$ and $\Omega(\log(n))$ lower bounds of \cite{MicaliValiant} for deterministic and randomized mechanisms can also be applied to the Image-and-Text setting. Moreover, the additive loss in the approximation of \cite{aggarwal2006knapsack}'s benchmark is not avoidable -consider for example the case when there exists an image ad with very large valuation. 

Thus we believe that the core benchmark is a more fundamental benchmark (as argued in series of papers starting with the work of \cite{ausubel2006lovely}), and as our work show, it looks amenable to a good multiplicative approximation ratio. We also refer to Section
\ref{subsec:comparison} for a discussion on this.
}

Finally, while we focus on the {\em forward} setting (i.e. an auctioneer selling
goods to various buyers), there is a very extensive literature on the
procurement (reverse auction) version of this problem (i.e. a buyer purchasing
goods from various sellers). In this line of work, the goal is to design
procurement auctions where the total amount paid by the buyer approximates a
certain {\em frugality benchmark}. This line of work was initiated in
\cite{archer2007frugal} in which the frugality benchmark is defined as the best solution
after the agents in the optimal solution are removed. A more sophisticated frugality benchmark was
introduced in \cite{karlin2005beyond}. Their benchmark can be seen as the
counterpart of the core-revenue benchmark in procurement settings. Frugality in
the procurement setting is also a topic which is too broad to be completely
covered here, but we mention a few recent papers on the topic:
\cite{kempe2010frugal,chen2009frugal,elkind2007frugality,immorlica2010first}.

\section{Preliminaries}

\subsection{Core Outcomes}\label{subsec:core_1}

We consider set $N = \{1, \hdots, n\}$ of single-parameter agents
with value $v_i$ for being allocated and value zero otherwise. The set of
feasible allocations is specified by an {\em environment}, which is a
collection of subsets of players that can be
simultaneously allocated $F \subseteq 2^N$. We say that an environment is
{\em downward-closed} if every subset of a feasible set is also feasible, i.e.,
$X \in F$ and $Y \subseteq X$ imply $Y \in F$.

An outcome in such environment is a pair $(X, p)$ where
$X \in F$ corresponds to the selected set of players and
$p \in \R^N$ is a vector of (possibly negative) payments.
Players have {\em quasi-linear} utility functions,
i.e., $u_i(X, p) = v_i - p_i$ if $i \in X$ and $u_i(X, p) = -p_i$
otherwise. We also define the utility of the auctioneer as its
revenue $u_0(X, p) = \sum_{i=1}^n p_i$.

Throughout this paper, given a vector $v \in \R^N$ and $S \subseteq N$,
we define $v(S) := \sum_{i \in S} v_i$.\\

We can associate with the single parameter setting described above a {\em
coalition value function} $w:2^{\bar{N}} \rightarrow \R_+$ (where $\bar{N} =
\{0\} \cup N$) given by:
\[
w(S) = \begin{cases}
\max_{X \in F, X \subseteq S, p \in \R^N_+} \sum_{i \in  S} u_i(X, p) & 0 \in S\\
0 & 0 \not\in S
\end{cases}
\]
for every $S \subseteq \bar{N}$. The pair $(\bar{N}, w)$ defines a {\em cooperative
game} with transferable utility. The coalition value of a set corresponds to
the total utility that can be obtained by a certain set by defecting from the
rest of the agents. Clearly, a coalition that doesn't contain the auctioneer
can't obtain any value. A coalition containing the auctioneer can obtain utility
equal to $\max_{X \in F, X \subseteq S, p \in \R^N_+} \sum_{i \in S} u_i(X, p) = \max_{X \in F, X \subseteq S}  v(X)$.

An imputation of utilities for a coalition $S \subseteq \bar{N}$ corresponds to
a vector of utilities $(u_i)_{i \in S}$ specifying how the coalition value is
split between the agents, in other words, a vector $u_i \geq 0, \forall i \in S$
and $\sum_{i \in S} u_i \leq w(S)$. We say that an imputation of utilities for
$\bar{N}$ is in the {\em core} if no coalition can defect and produce an
imputation of utilities that is better for all agents in the coalition.
Formally:

\begin{definition}[core] Given a cooperative game $(\bar{N}, w)$ we define the core as
  the following set of utility imputations:
  $$\core(F,v) =
  \left\{ u \in \R^{\bar{N}}_+; \sum_{i=0}^n u_i = w(\bar{N}) \text{
  and } w(S) \leq \sum_{i \in S} u_i, \forall S \subseteq \bar{N} \right\} $$
\end{definition}

Notice that $w(S) \leq \sum_{i \in S} u_i$ is a necessary and sufficient
condition for $S$ not wanting to defect. We say now that an outcome $(X, p)$
is in the core if the utilities produced are in $\core(F,v)$. Precisely:

\begin{definition}[core outcomes]\label{defn:core_1}
  Given a single parameter setting $F$ and
  valuation profile $v$, an outcome $(X, p)$ is in the core if the vector of
  utilities is in $\core(F,v)$.
\end{definition}

The following are important properties of core outcomes:
\begin{enumerate}
\item A core outcome
is also a social welfare maximizing outcome, since
$ \sum_{i \in X} v_i = \sum_{i=0}^n u_i = w(\bar{N}) = \max_{X^* \in F}
v(X^*)$;
\item The core is always non-empty, since the following allocation is always in
  the core: $(X^*, p)$ where   $X^*$ maximizes $v(X)$ and $p_i = v_i$ 
  or $i \in X^*$ and $p_i = 0$ otherwise;
\item Given a utility imputation $u \in \core(F,v)$, there is a core outcome that realizes this vector:
select a set $X^* \in F$ maximizing $\sum_{i \in X^*} v_i$ and allocate
to $X$ and charge prices $p_i = v_i - u_i$ for $i \in X^*$ and $p_i = 0$
otherwise. The outcome clearly realizes utilities for $i \in X^*$. For $i \notin
X^*$, notice that  $w(\bar{N}) = \sum_{i=0}^n u_i =
(u_0 + \sum_{i \in X^*} u_i + \sum_{i \in N \setminus X^*} u_i) \geq
w(\bar{N}) + \sum_{i \in N \setminus X^*} u_i$. So for all $i \in N
\setminus X^*$, $u_i = 0$;
\item If the environment $F$ is downward-closed, then for every $u \in
  \core(F,v)$ there is an outcome with  non-negative payments that realizes it.
  The construction is the same as in the previous item. Note that if $F$ is
  downward closed, $X^* \setminus i \in F$ for every $i \in X^*$, therefore:
  $v(X^*) = u_0 + u(X^*) \geq u_i + v(X^* \setminus i)$ so $u_i \leq v_i$
  and hence $p_i = v_i - u_i \geq 0$.
\end{enumerate}

The previous observations allow us to rephrase Definition \ref{defn:core_1} in a
more direct way. Notice that in the following definition,
$v(S \setminus X) \leq p(X \setminus S)$ is a simple rephrasing of
the $w(S) \leq \sum_{i \in S} u_i$ condition.

\begin{definition}[core outcomes - rephrased]\label{defn:core_2}
  Given a single parameter setting $F$ and
  valuation profile $v$, an outcome $(X, p)$ is in the core if $p_i \leq v_i$ for all $i
  \in N$ and \comment{  $X \in \text{argmax}_{X \in F} v(X)$ and} for all $S \in F$, $$v(S \setminus X) \leq p(X \setminus S)$$
\end{definition}

Definition \ref{defn:core_2} allows for a natural interpretation of the core in
auction settings. If an outcome is not in the core, then there is a set $S$ with
$v(S \setminus X) > p(X \setminus S)$, which means that agents in $S \setminus
X$ could come to the auctioneer and offer him to evict agents $X \setminus S$
and allocate to them instead, since they are able to collectively pay the
auctioneer more than the revenue he is getting from $X \setminus S$. This
characterizes core outcomes as outcomes for which no negotiation is possible
between the auctioneer and losing coalitions

\subsection{Core-revenue benchmark}

The discussion after Definition \ref{defn:core_2} shows that whenever an outcome
is not in the core, the auctioneer can potentially raise his revenue by
negotiating with losing coalitions. This suggests that the revenue of the core
might be a natural benchmark against which to compare. We define as follows:

\begin{definition}
  Given a single parameter setting $F$ and a valuation profile $v$, we define
  the core revenue benchmark as:
  $$\cb(F,v) := \min \{ u_0| u \in \core(F,v) \}.$$
\end{definition}

Consider for example the case of multi-unit auctions, which can be modeled by $F
= \{ X \subseteq N; \abs{X} \leq k \}$ for some fixed constant $k< n$ and agents
sorted such that $v_1 > v_2 > \hdots > v_n$. It is straightforward from
Definition \ref{defn:core_2} that an outcome is in the core iff it allocates to
$X = \{1,\hdots, k\}$ and if $p_i \geq v_{k+1}$ for $i \in X$. Notice that
the revenue from core outcomes range from $k \cdot v_{k+1}$ all the way to
$\sum_{i=1}^k v_k$. The core benchmark corresponds to the minimum revenue of a
core outcome, so for multi-unit auctions $\cb(F,v) = k \cdot v_{k+1}$.

It is not a coincidence that this is the same revenue as the VCG auction.
In fact, it is a well-known fact that the core revenue is always at least
the VCG revenue. This holds with equality when $F$ is a matroid. For an
in-depth discussion on the relation between the VCG mechanism and the core we
refer the reader to Ausubel and Milgrom \cite{AM02} and Day and Milgrom
\cite{DM08}.

\begin{lemma}[\cite{AM02}]
  For any environment $F$ and any valuation profile $v$, the price paid by any
  agents in a core outcome is at least his VCG price. This implies in particular
  that  the core revenue   benchmark is at least the revenue of the VCG mechanism:
  $$\cb(F,v) \geq \vcgr(F,v) := \sum_{i \in X^*} [v(X^*_{-i}) - v(X^*) + v_i]$$
  where $X^* = \text{argmax}_{X \in F} v(X)$ and $X^*_{-i} = \text{argmax}_{X
  \in F, i \notin F} v(X)$. Moreover, if $F$ is a matroid, the the above
  expression holds with equality.
\end{lemma}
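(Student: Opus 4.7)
The plan is to use the rephrased core condition from Definition \ref{defn:core_2} applied to a carefully chosen coalition. Fix a core outcome $(X^*, p)$ and a winner $i \in X^*$. I would set $S = X^*_{-i}$, the welfare-maximizing feasible set not containing $i$, which is certainly in $F$ (assuming feasibility, otherwise $X^*_{-i}$ is defined as the argmax over $F$ anyway). The core condition then gives
\[
v(X^*_{-i} \setminus X^*) \leq p(X^* \setminus X^*_{-i}).
\]

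Now the bookkeeping step: using $v(X^*_{-i}) - v(X^*) = v(X^*_{-i} \setminus X^*) - v(X^* \setminus X^*_{-i})$ and noting $i \in X^* \setminus X^*_{-i}$, write $X^* \setminus X^*_{-i} = \{i\} \cup T$ where $T = X^* \setminus (X^*_{-i} \cup \{i\})$. Combining the displayed inequality with this decomposition yields
\[
v(X^*_{-i}) - v(X^*) + v_i + v(T) \;\leq\; p_i + p(T).
\]
To conclude $p_i \geq v(X^*_{-i}) - v(X^*) + v_i$, I would invoke the individual-rationality piece of Definition \ref{defn:core_2}, $p_j \leq v_j$ for every $j$, which gives $p(T) \leq v(T)$ and hence the VCG lower bound on $p_i$. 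Summing over $i \in X^*$ yields the core-revenue lower bound $\cb(F,v) \geq \vcgr(F,v)$.

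For the matroid equality claim, I would first argue that the VCG outcome itself lies in the core when $F$ is a matroid, which forces the minimum-revenue core outcome to coincide with VCG. The key tool is the strong basis exchange property: if $X^*$ is the max-weight basis and $i \in X^*$, then $X^*_{-i} = (X^* \setminus \{i\}) \cup \{j\}$ for some element $j$ (with the convention that the argmax is taken inside $F$). Consequently the set $T$ in the argument above is empty, and the inequality $v(X^*_{-i}) - v(X^*) + v_i \leq p_i$ can be met with equality by the VCG price. What remains is verifying that this VCG-priced outcome actually satisfies the core inequality $v(S \setminus X^*) \leq p(X^* \setminus S)$ for every $S \in F$; this follows from a standard matroid-optimization argument in which one uses the exchange property to find, for each element of $S \setminus X^*$, a distinct element of $X^* \setminus S$ it can displace, and bounds the value gain by the sum of VCG prices displaced.

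The main obstacle is the last verification (VCG is in the core for matroids): the reverse direction of the inequality relies on matroid structure and is not purely algebraic, whereas the first half of the lemma is essentially an algebraic manipulation of Definition \ref{defn:core_2} combined with individual rationality. Since the paper explicitly attributes this lemma to \cite{AM02} and directs the reader there for a deeper discussion, I would likely just sketch the matroid part and cite \cite{AM02,DM08} for the detailed argument, presenting the first half in full as above.
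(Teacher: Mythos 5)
Your proposal is correct and follows essentially the same route as the paper: both apply the rephrased core condition of Definition \ref{defn:core_2} with $S = X^*_{-i}$, use the individual-rationality bound $p_j \leq v_j$ on the remaining winners in $X^* \setminus X^*_{-i}$ to isolate $p_i$, and establish the matroid equality by exhibiting the basis-exchange bijection $\sigma : S \setminus X^* \to X^* \setminus S$ showing the VCG outcome is itself in the core. The only difference is presentational (your explicit set $T$ versus the paper's one-line rewriting), so no further comparison is needed.
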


\begin{proof}
  If $(X, p)$ is a core outcome, by the condition in Definition
  \ref{defn:core_2}, $v(X^*_{-i} \setminus X^*) \leq p(X^* \setminus
  X^*_{-i})$, which can be re-written as: $v(X^*_{-i}) - v(X^*) \leq - [ v(X^*
  \setminus X^*_{-i}) - p(X^* \setminus X^*_{-i}) ] \leq v_i - p_i$. So
  $p_i \geq v(X^*_{-i}) - v(X^*) + v_i$ which is the revenue that the VCG
  mechanism extracts from player $i$.

  If $F$ is a matroid, then for each $i \in X^*$, $X^*_{-i}$ is of the form
  $X^*_{-i} = X^* \cup j \setminus i$ and therefore the VCG payments are given
  by   $p_i = \max\{ v_j; j \notin   X^*; X^* \cup j \setminus i \in F \}$.
  Now, we show that the VCG outcome is in the core: for any matroid
  basis $S \in F$,   there is a   one-to-one mapping between
  $\sigma: S\setminus X \rightarrow X \setminus S$ such that for
  $i \in S$ with $v_i > 0$, $X \cup i \setminus \sigma(i) \in F$, therefore,
  $p_{\sigma(i)} \leq v_{i}$. Summing this inequality for all $i \in S$ we
  obtain the core condition in Definition \ref{defn:core_2}.
\end{proof}

The previous lemma says that when there is {\em substitutability} among agents,
the core revenue benchmark is exactly the VCG revenue. When there are
complementarities, however, the core revenue benchmark can be arbitrarly higher
than the VCG revenue. Consider for example the famous example of 
\cite{AM02, ausubel2006lovely} in which there are $3$ players and $2$ items: the
first player has a valuation of $1$ for the first item, the second player has
a valuation of 1 for the second item and the third player has a valuation of $1$
for getting both items. This example can be translated to our setting by taking
the environment to be $F = \{\emptyset, \{1\}, \{2\}, \{3\}, \{1,2\} \}$. The VCG auction
allocates $X = \{1,2\}$ and charges zero payments. So, $\vcgr(F,v) = 0$. The
core revenue, however, is equal to one ($\cb(F,v) = 1$) since by
taking the condition in Definition \ref{defn:core_2} with $X = \{1,2\}$ and $S =
\{3\}$, we get: $p_1 + p_2 \geq v_3 = 1$.

\subsection{Core competitive auctions}

Our goal in this paper is to be able to truthfully extract revenue that is
competitive with the core-revenue benchmark. An auction for the single parameter
setting consists of two mappings: (i) {\em allocation function}, that maps a
profile of valuation functions to a distribution over allocations $x: \R^n_+
\rightarrow \Delta(F)$, where $\Delta(F)$ denotes the set of probability
distributions over $F$; (ii) {\em payment function}, that maps a profile of
valuation functions to the expected payment of each agent: $p: \R^n_+
\rightarrow \R^N_+$.

We abuse notation and define the maps $x_i : \R^N_+ \rightarrow [0,1]$
as the probability of winning for player $i$, i.e., $x_i(v) = \Prob[i \in
X(v)]$. A mechanism is said to to be {\em individually rational} if for all
profiles $v$, $u_i(v) = v_i x_i(v) - p_i(v) \geq 0$. A mechanism is said to be
{\em incentive-compatible} (a.k.a. {\em truthful}) if agents maximize their
utility by reporting their true value. In other words:
$$v_i x_i(v) - p_i(v) \geq v_i x_i(v'_i, v_{-i}) - p_i(v'_i, v_{-i})\quad \forall v'_i$$

The following lemma due to Myerson \cite{M81} gives necessary and sufficient
conditions for an auction to be individually rational and incentive compatible:

\begin{lemma}[\cite{M81}]
\label{thm:tf}
  A mechanism defined by maps $x$ and $p$ is individually rational and incentive
  compatible if: (i) for every $i$ and fixed valuations $v_{-i}$ for other
  players, $v_i \mapsto x_i(v_i, v_{-i})$ is monotone non-decreasing; (ii) the
  payment function is such that $p_i(v_i) = v_i x_i(v_i, v_{-i}) -
  \int_0^{v_i} x_i(u, v_{-i}) du$.
\end{lemma}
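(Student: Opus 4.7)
My plan is to prove sufficiency directly: assume (i) monotonicity of $x_i(\cdot, v_{-i})$ and (ii) the payment identity, and verify IR and IC for player $i$, treating $v_{-i}$ as fixed throughout (so both $x_i$ and $p_i$ become functions of a single real variable, which I will denote $x_i(v_i)$ and $p_i(v_i)$ for brevity). The cleanest route is the standard ``envelope'' manipulation: first combine (i) and (ii) to derive a closed-form expression for the on-path utility $u_i(v_i) = v_i x_i(v_i) - p_i(v_i)$, then reduce IC to a simple area comparison between the integrals of $x_i$ over two intervals.

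First I would substitute (ii) into the utility to get $u_i(v_i) = \int_0^{v_i} x_i(u, v_{-i})\, du$, which immediately gives IR since $x_i \geq 0$. For IC, I would compute the utility of misreporting $v_i' \neq v_i$:
\[
v_i x_i(v_i') - p_i(v_i') = (v_i - v_i') x_i(v_i') + \int_0^{v_i'} x_i(u, v_{-i})\, du,
\]
again using (ii). Subtracting this from $u_i(v_i)$, the IC inequality reduces to showing
\[
\int_{v_i'}^{v_i} x_i(u, v_{-i})\, du \geq (v_i - v_i')\, x_i(v_i', v_{-i}),
\]
with the integral interpreted with sign when $v_i' > v_i$.

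The only remaining step is a case split on the sign of $v_i - v_i'$, in each case using monotonicity (i) to bound the integrand pointwise by $x_i(v_i', v_{-i})$ (from above if $v_i' > v_i$, from below if $v_i' < v_i$). In both cases the desired inequality follows by integrating the pointwise bound over $[\min(v_i, v_i'), \max(v_i, v_i')]$. I do not expect a real obstacle here; the single subtlety is making sure the integral formula in (ii) is well-defined, which is fine because monotone functions are Riemann-integrable on any bounded interval, so no measurability or differentiability assumptions on $x_i$ are needed.
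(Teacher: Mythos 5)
Your proof is correct and complete: the substitution of the payment identity yields $u_i(v_i)=\int_0^{v_i}x_i(u,v_{-i})\,du\ge 0$ for individual rationality, and the IC inequality reduces exactly to the area comparison you state, which follows from monotonicity in each of the two cases; the observation that monotonicity alone guarantees Riemann integrability is the right way to dispose of the regularity issue. The paper itself offers no proof of this lemma (it is cited to Myerson), and your argument is the standard one that would be supplied, so there is nothing to flag.
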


Our goal in this paper is to study auctions whose revenue is competitive with
the core-revenue benchmark.

\begin{definition}[core competitive auctions]
  We say that an auction defined by $x, p$ is $\alpha$-core competitive if for
  every profile of valuation functions $v \in \R^N_+$, $$\sum_i p_i(v) \geq
  \alpha^{-1} \cdot \cb(F,v).$$
\end{definition}

\subsection{Comparison with other benchmarks}\label{subsec:comparison}

A natural question at this point is how does the core benchmark compare with
other revenue benchmarks. Perhaps one of the closest benchmarks in this spirit is the
envy-free benchmark, which corresponds to the minimum revenue of an allocation
for which an agent would not want to trade positions with a different agent. This
benchmark has been successfully used in various papers (
\cite{GuruswamiHKKKM05, HartlineY11, HaH13, DevanurHH13} to cite a few) to design
approximately-optimal revenue-extracting mechanisms. This benchmark, however, is
very appropriate for {\em symmetric} settings, i.e., a setting in which whenever
an allocation is feasible, a similar allocation with the names of agents
permuted is also feasible. For asymmetric settings, however, it is not clear
what the envy-freedom condition means since some agents can't be simply replaced
by others. In an ad auction where ads can be either texts (occupying one slot)
or images (occupying multiple slots), it is not clear how to define what the
envy of an image for a text means, since the image is not able to replace a
single text.

On the other hand, however, the core-revenue benchmark captures some notion of
``envy", which is made explicit in Definition \ref{defn:core_2}. One can think
of the inequality in the defintion as the ``envy'' of an allocated image
for a group of allocated text ads. Or more generaly, as the ``envy'' of a set of
losing players for a set of winning players that they can replace.
What the core benchmark doesn't capture, however, is the ``envy" from one
allocated agent for another allocated agents.
For this reason, for symmetric settings, the envy free benchmark can be
arbitrarly higher than the core-revenue benchmark, which boils down to the VCG
revenue, as discussed in Section \ref{subsec:core_1}.

Another important benchmark against which to compare is the one introduced by
Micali and Valiant \cite{MicaliValiant}. Given any feasiblity set, the
authors define as the maximum social welfare obtainable after the largest valued
agent is excluded. Formally:  $$\mv(F,v) = \max_{X \in F, i^* \notin X} v(X)$$
where $i^*$ is the agent with largest value\footnote{The benchmark of
\cite{MicaliValiant} is defined for a generic multi-parameter setting. For the
exposition, we specialize it for the single-parameter setting we are studying.}.

\begin{lemma}
  For any environment $F$ and any valuation profile $v$, the core revenue
  benchmark is dominated by the Micali-Valiant benchmark: $$\mv(F,v) \geq
  \cb(F,v).$$
\end{lemma}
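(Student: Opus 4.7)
The plan is to produce a single utility imputation $u \in \core(F,v)$ with $u_0 = \mv(F,v)$; since $\cb(F,v) = \min\{u_0 : u \in \core(F,v)\}$, this immediately yields $\cb(F,v) \leq \mv(F,v)$.

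Let $X^* \in \text{argmax}_{X \in F} v(X)$ and let $i^*$ denote an agent with the largest value. I would try the imputation that gives the seller exactly $\mv(F,v)$, the top bidder all of the remaining efficient surplus, and nothing to anyone else:
\[
u_0 = \mv(F,v), \qquad u_{i^*} = v(X^*) - \mv(F,v), \qquad u_i = 0 \text{ for } i \in N \setminus \{i^*\}.
\]
Non-negativity of $u_{i^*}$ is immediate because any set witnessing $\mv(F,v)$ is itself feasible, giving $v(X^*) \geq \mv(F,v)$. The efficiency identity $\sum_i u_i = v(X^*) = w(\bar{N})$ holds by construction.

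The only substantive step is verifying the blocking inequality $w(S) \leq \sum_{i \in S} u_i$ for every $S \subseteq \bar{N}$. If $0 \notin S$ the right-hand side is already at least $w(S)=0$, so assume $0 \in S$. I would then split on whether $i^* \in S$. If $i^* \in S$, the right-hand side is at least $u_0 + u_{i^*} = v(X^*) \geq w(S)$, using that $X^*$ is welfare-optimal over all of $F$. If $i^* \notin S$, then every feasible $X \subseteq S \setminus \{0\}$ excludes $i^*$, so $w(S) \leq \max_{X \in F,\, i^* \notin X} v(X) = \mv(F,v) = u_0$, and the inequality again holds.

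I do not expect any real obstacle. The construction is dictated by identifying the tightest coalition constraint -- namely the one for $\bar{N} \setminus \{i^*\}$, whose defection value is exactly $\mv(F,v)$ -- and then saturating that constraint as the seller's share while routing the rest of the welfare to $i^*$. The argument uses only the definition of $w$, the definition of $\mv$, and the optimality of $X^*$; notably, it needs neither downward-closedness of $F$ nor realizability of $u$ by a non-negative payment vector.
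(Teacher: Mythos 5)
Your proof is correct, and it reaches the paper's conclusion by a somewhat different route. The paper starts from the VCG outcome $(X,p)$, modifies the payments so that every winner other than the top-valued agent $i^*$ pays its full value while $i^*$ keeps its VCG price, verifies the resulting outcome is in the core using the outcome-based rephrasing of the core condition ($v(S\setminus X)\leq p(X\setminus S)$), and computes that its revenue equals $\mv(F,v)$. You instead work entirely in imputation space: you write down the utility vector directly and check $w(S)\leq\sum_{i\in S}u_i$ coalition by coalition. The two arguments in fact produce the same imputation --- in the paper's construction one has $u_{i^*}=v_{i^*}-p_{i^*}=v(X^*)-\max_{X\in F,\,i^*\notin X}v(X)=v(X^*)-\mv(F,v)$, with every other bidder at zero and $u_0=\mv(F,v)$ --- but your presentation is more elementary and self-contained: it needs neither the VCG payment formula nor the translation between imputations and price vectors, and, as you correctly note, it does not require the imputation to be realizable by non-negative payments (something the paper only guarantees for downward-closed environments). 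Your case split on whether $i^*\in S$ is exactly the right one, and identifying the coalition $\bar{N}\setminus\{i^*\}$ as the binding constraint makes transparent why $\mv(F,v)$ is the natural upper bound on the seller's core share. The only edge case worth a sentence is when no feasible set excludes $i^*$; for downward-closed $F$ the empty set is feasible, so $\mv(F,v)$ is well defined and non-negative, and the argument goes through unchanged.
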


\begin{proof}
  Let $(X, p)$ be the outcome of the VCG auction. Now, define $p'$ such that
  $p'_i = v_i$ if $i \in X \setminus i^*$, $p'_{i^*} = p_{i^*}$ and
  $p'_i = 0$ otherwise. First we show that $(X, p')$ is in the core. Notice
  that if $i^* \notin X$, then $p'(X \setminus S) = v(X \setminus S) \geq v(S
  \setminus X)$ so clearly $(X,p')$ is in the core. If $i^* \in X$, then
  $p'_{i^*} = v(X_{-i^*}) - v(X \setminus i^*)$ where $X_{-i^*}$ is the
  allocation with $X \in F, i^* \notin X$ maximizing $v(\cdot)$. Therefore
  $p(X) = v(X\setminus i^*) + p'_{i^*} =  v(X_{-i^*})$ therefore, $p(X
  \setminus S) = v(X_{-i^*}) - p(X \cap S) \geq v(S) - v(X \setminus S) = v(S
  \setminus X)$.   Finally, notice that $\cb(F,v) \leq p'(X) = \mv(F,v)$.
\end{proof}

Micali and Valiant \cite{MicaliValiant} give an individually rational and
incentive compatible randomized mechanism whose revenue is an $O(\log n)$
approximation of $\mv(F,v)$ and that such approximation factor is tight.
This directly translates in a same factor approximation for the
core-revenue benchmark. They also show that no deterministic auction can
approximate $\mv(F,v)$ by a factor better then $\Omega(n)$.

One reason for which it is hard to improve the $\mv$-benchmark even for very
simple settings, $\mv$ is too stringent: for example, for the digital goods
setting $F = 2^N$, $\mv(F,v) = \sum_i v_i - \max_i v_i$. Indeed, both lower
bounds in \cite{MicaliValiant} are given for the digital goods setting.
For this setting, the core-revenue benchmark is zero, since there is no
natural competition among the agents.

We believe that the core revenue benchmark provides a more achievable goal and
therefore a more likely avenue for improvement for particular settings.
For the Text-and-Image setting, for example, the lower bounds of
\cite{MicaliValiant} imply that no mechanism can approximate the $\mv$-benchmark
by a better factor then $\Omega(\log k)$ for randomized mechanisms and
$\Omega(k)$ for deterministic mechanisms. For the $\cb$-benchmark, however, we
are able to obtain $O(\ln \ln k)$ and $O(\sqk)$ respectively.

The core revenue also has the
important property of disentangling the problems of achieving high revenue
for setting with substitutes and for settings with complements, since the former
becomes trivial under the $\cb$-benchmark while the latter is quite
challenging. Under the $\mv$-benchmark, both substitutes and complements are
challenging.

\comment{

\subsection{Text-and-Image auctions}

We model our auctions as selling $k$ identical items to bidders each with certain demand. In the Image-Text Auction (ITA) the bidders either demand $1$ or all the $k$ items and in the Vide-Pod Auction (VPA) bidders demand an arbitrary number of items in $[k]$. We show the valuation profile of all bidders by vector $\theta$  where the type of bidder $i$ is $\theta_i = (d, v) \in [k] \times \bbR^+$. Let $\theta_{-i}$ be the valuation profile obtained by removing bidder $i$ and $\tp{d'_i,v'_i}{\theta_{-i}}$ be the valuation profile obtained by replacing bidder $i$ with a bidder with type $(d'_i,v'_i)$.

We specify a randomized mechanism ($\M$) by ordered pair $(w, p)$ where $w_i(\theta)$ is the winning probability of bidder $i$ in valuation profile $\theta$ and $p_i(\theta)$ is her expected payment. We use the following well-known characteristic of the truthful randomized mechanisms in the single parameter model in this paper frequently (see \eg \citet[Chapter 9]{NRTV07} for more details).
\begin{theorem}
\label{thm:tfrm}
Randomized mechanism $\M = (w, p)$ is truthful if and only if for any valuation profile $\theta$ and any bidder $i$ with type $(d_i, v_i)$ the followings hold.
\begin{enumerate}
\item Function $w_i\tp{d_i, v_i}{\theta_{-i}}$ is weakly monotone in $v_i$. 
\item $p_i(\theta) = v_i \cdot w_i(\theta) - \int_{0}^{v_i} w_i\tp{d_i, t}{\theta_{-i}} dt $ 
\end{enumerate}
\end{theorem}

We show a deterministic mechanism by $\M = (x, p)$ where $x_i(\theta)$ is one if bidder $i$ wins and zero otherwise and $p_i(\theta)$ is the payment of bidder $i$ in valuation profile $\theta$. Note that deterministic mechanisms are special cases of randomized ones. A simplified version of Theorem \ref{thm:tfrm} is the following theorem which we use in order to design truthful deterministic mechanism in this paper.
\begin{theorem}
\label{thm:tfdm}
Let $\M = (x, p)$ be a mechanism with allocation and payment functions $x$ and $p$. Mechanism $\M$ is truthful (IC) if and only if the followings hold.
\begin{enumerate}
\item $x$ is weakly monotone, \ie, if each bidder increases her bid while fixing others' bids, her chance of winning increases.
\item If bidder $i$ is a winner then its payment is its critical value which is the minimum value $c$ for which $x_i\left(\tp{c, d_i}{\theta_{-i}}\right)$ is one and otherwise, her payment is zero.
\end{enumerate} 
\end{theorem}
}

\section{$O(\sqk)$-core-competitive auction for Text-and-Image setting}

\subsection{Text-and-Image Setting}

Consider $k$ advertisement slots and $n$ bidders. Each bidder either corresponds
to a text ad, which demands one slot to be displayed, or an image, which demands
all $k$ slots. It is public information that whether each bidder is a text or an
image. Each bidder's value for being displayed is given by $v^T_i$ for text ads
and $v^I_i$ for image ads. The values are private information of the bidders.

Let $n^T$ and $n^I$ be the number of text and image ads respectively. We assume
w.l.o.g. that $n^T \geq k+1$ and $n^I \geq 2$ (adding a few extra bidders with
value zero if necessary). We also assume that the indices of the players are
sorted such that valuations of text ads are
$v_1^T \geq v_2^T \geq \hdots \geq v_{n^T}^T$
and valuations of image ads are
$v_1^I \geq v_2^I \geq \hdots \geq v_{n^I}^I$. For convenience, we define the
{\em maximum extractable revenue of text ads} as:
$$\Phi^T := \max_{j\in\{1..k\}} j \cdot v^T_j$$
We will also denote the $k$-th harmonic partial sum by $\H_k = \sum_{j=1}^k
\frac{1}{j} = O(\ln k)$. It is a well known fact that 
\begin{equation}
\label{eq:mert}
\Phi^T \geq \frac{1}{\H_k} \sum_{j=1}^k v_j^T,
\end{equation}
 since $j \cdot v^T_j \leq \Phi^T$ for all
$j$, so $\frac{1}{j} \Phi^T \geq v_j^T$. We finish the argument by summing
the previous inequality for all $j = 1..k$.

\subsection{A deterministic core-competitive auction}

We start by presenting a $O(\sqk)$-core competitive deterministic
auction. We will use this mechanism as a building block for the more complicated
randomized mechanism given in Section \ref{sec:rand-ita}. As a first step,
we provide a characterization of the core-revenue in that setting:

\begin{lemma} 
\label{lem:ita:corerevenue}
Given a Text-and-Image setting, if the highest value feasible set consists of text ads ($\sum_{i=1}^k v^T_i \geq v_1^I$) then $\cb(F,v) = \max\{
k v^T_{k+1}, v_1^I\}$. If the highest value feasible set consists of an image ad,
then $\cb(F,v) = \max\{v^I_2, \sum_{i=1}^k v^T_i\}$.
\end{lemma}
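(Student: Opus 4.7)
The plan is to treat the two cases separately, using the rephrased core characterization in Definition \ref{defn:core_2} together with the fact (property 1 after Definition \ref{defn:core_2}) that every core outcome is welfare-maximizing, which pins down the winning set $X^*$ in each case.

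\textbf{Case 1} (text winners, $X^* = \{1,\ldots,k\}$). For the lower bound I would instantiate $v(S\setminus X^*) \le p(X^* \setminus S)$ with two families of blocking coalitions. The swap set $S = (X^*\setminus\{i\}) \cup \{k+1\}$ is feasible (it has $k$ text ads) and yields $p_i \ge v_{k+1}^T$ for each $i \in X^*$, so $\sum_i p_i \ge k v_{k+1}^T$; the singleton $S = \{1^I\}$ yields $\sum_i p_i \ge v_1^I$. For the matching upper bound I would exhibit payments $p_i \in [v_{k+1}^T, v_i^T]$ summing to exactly $\max\{k v_{k+1}^T, v_1^I\}$: start from $p_i = v_{k+1}^T$ for all $i$ and, if $v_1^I > k v_{k+1}^T$, raise individual payments using the total slack $\sum_i(v_i^T - v_{k+1}^T) \ge v_1^I - k v_{k+1}^T$ granted by the Case~1 hypothesis $\sum_{i=1}^k v_i^T \ge v_1^I$. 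It then remains to check \emph{every} core constraint: for any feasible text coalition $S = A \sqcup B$ with $A \subseteq X^*$, $B \cap X^* = \emptyset$, and $|A|+|B| \le k$, one has $\sum_{j\in B} v_j^T \le |B|\, v_{k+1}^T \le (k-|A|)\, v_{k+1}^T \le \sum_{i \in X^*\setminus A} p_i$, using the pointwise bound $p_i \ge v_{k+1}^T$, and for $S = \{i^I\}$ one has $v_i^I \le v_1^I \le \sum_i p_i$.

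\textbf{Case 2} (image winner, $X^* = \{1^I\}$). Only $p_{1^I}$ can be nonzero. The lower bound $p_{1^I} \ge \max\{v_2^I, \sum_{i=1}^k v_i^T\}$ follows from the two blocking coalitions $S = \{2^I\}$ and $S = \{1,\ldots,k\}$. Achievability is immediate by setting $p_{1^I}$ equal to this maximum, which is at most $v_1^I$ by the Case~2 hypothesis $v_1^I > \sum v_i^T$ and by $v_1^I \ge v_2^I$; the remaining constraints, coming from $S = \{i^I\}$ with $i \ge 2$ and arbitrary text subsets of $X^*$'s complement, are subsumed by the two binding ones.

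The only genuinely subtle step is the verification of all blocking constraints under the (possibly non-uniform) payments in Case~1. Once one observes that the pointwise bound $p_i \ge v_{k+1}^T$ combined with the cardinality constraint $|A|+|B|\le k$ dominates every text coalition termwise, everything else is routine bookkeeping.
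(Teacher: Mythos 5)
Your proof is correct and follows essentially the same route as the paper's: fix the welfare-maximizing winner set in each case, then read off the lower bound $\max\{k v_{k+1}^T, v_1^I\}$ (resp.\ $\max\{v_2^I, \sum_{i=1}^k v_i^T\}$) from the swap coalitions and the single cross-type blocking coalition. The one place you go beyond the paper is the achievability direction --- exhibiting payments that attain the bound and checking every remaining blocking constraint against the (possibly non-uniform) prices --- which the paper asserts without verification even though it is needed, since $\cb(F,v)$ is a minimum over core outcomes.
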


\begin{proof}
Note that in Text-and-Image setting the winner set cannot contain both text and image ads.
Now consider the special case where $\sum_{i=1}^k v^T_i = v_1^I$. In this case no matter from which group is the winning set, the sum of payments has to be at least $\sum_{i=1}^k v^T_i = v_1^I$. Because if the sum of payments is less, then the non-winning group can offer more to the auctioneer and all of them benefit more.

Now consider the case where $\sum_{i=1}^k v^T_i > v_1^I$. In this case the winners are the first $k$ text ads with sum of valuations $\sum_{i=1}^k v^T_i$. In order to be a core outcome, the sum of payments of the winners has to be more than valuations of image ads and hence more than $v^I_1$. The payment of each winner also has to be more than the valuation of the highest text ad who is not in the winning set which is $v^T_{k+1}$. Therefore, the sum of payments of the winners has to be more than $k \cdot v^T_{k+1}$. We conclude for this case that $\cb(F,v) = \max\{k v^T_{k+1}, v_1^T\}$.

Now consider the case where $\sum_{i=1}^k v^T_i < v_1^I$. In this case, the winner is an image ad with value $v^I_1$. In order to be a core outcome, the payment of the winner has to be at least the value of the second best image ad which is $v^I_2$. The payment of the winner also has to be more than the sum of valuations  of the highest $k$ text ads which is $\sum_{i=1}^k v^T_i$. We conclude that $\cb(F,v) = \max\{v^I_2, \sum_{i=1}^k v^T_i\}$.
\end{proof}

Recall the example by Ausubel and Milgrom discussed in the introduction:
if we have two text ads and one image ad all with value $1$, the text ads are
selected and their payment is zero. The reason for that is that if any text ad
decreases his value all the way to $\epsilon > 0$, the text ads are still
selected. One way to get around this problem is picking the allocated set in
such a way that a decrease in value for any given text significantly decreases
the likelihood of the entire set being picked.

A natural way to do so is to allocate to the set which has the potential of
generating the largest revenue. One proxy for that is the maximum extractable
revenue $\Phi^T$ which corresponds to the maximum revenue you can extract
by setting a uniform price. This motivates the mechanism that allocates to the
highest value image ad if $v_1^I \geq \Phi^T$ and otherwise allocates to
the $j$ highest text ads where $j$ is the maximum index such that $j v_j^T \leq
v_1^I$. Here the payments are according to critical prices.

In the Ausubel and Milgrom example, for instance,
the text ads are still allocated but their threshold is now $\frac{1}{2}$, so
their total revenue is $1$. This mechanism is clearly truthful since the
allocation is monotone and its revenue is clearly an improvement over VCG. The
gap between its revenue and the core-revenue benchmark can be as bad as $O(\ln
k)$. Consider the following example: one image ad with value $\H_k$ and
$k$ text ads with value $1/i$ for $i=1, \hdots, k$. The core-benchmark is $\H_k$
but the revenue of the mechanism is only $\Phi^T = 1$.

A way to improve this mechanism is to increase the weight attributed to the text
ads by a factor of $\sqk$. Now, we are ready to define our mechanism:

\begin{framed}
{\bf Allocation rule:} If $v^I_1 \geq \Phi^T \cdot \sqk$ then allocate to the
highest value image ad. Otherwise, allocate to the $j$ text ads with
largest values where $j$ is the largest $j \leq k$ such that $j \cdot v^T_j \geq v_1^I /
\sqk$.

{\bf Pricing rule:} Allocated bidders are charged according to critical values.
\end{framed}

\begin{lemma} 
\label{lem:dcv}
In the deterministic Text-and-Image mechanism, if the first image ad
wins, her critical value is $\max \{ v_2^I, \Phi^T \cdot \sqk \}$. If
a set of $j$ text ads win, their critical value is $\max\{ v^T_{k+1}, v_1^I / (j
\cdot \sqk)\}$.
\end{lemma}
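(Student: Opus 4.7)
My plan is to compute the critical value of each winner directly, exploiting that the allocation is monotone in a bidder's own value so that payment equals the infimum bid at which the bidder still wins. The image case is immediate: image ad $1$ is allocated iff her bid exceeds both $v_2^I$ (so she is the top image) and $\Phi^T\sqk$ (so the image branch triggers), and both thresholds depend only on other bidders, yielding critical value $\max\{v_2^I,\Phi^T\sqk\}$.

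For the text case, let bidder $i$ be a winning text ad (so $i\leq j$) and set $c^* := \max\{v^T_{k+1},\,v_1^I/(j\sqk)\}$. Write $p(c)$ for bidder $i$'s rank among all text ads when she bids $c$, and $\tilde v^T_{j'}(c)$ for the $j'$-th largest text value under that bid. I will use the two defining inequalities for $j$: $j v^T_j \geq v_1^I/\sqk$, and (when $j<k$) $(j+1)v^T_{j+1}<v_1^I/\sqk$, which in particular gives $v^T_{j+1}<v_1^I/(j\sqk)\leq c^*$.

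For the upper bound I show bidder $i$ wins whenever $c\geq c^*$. Since $c\geq v^T_{k+1}$, at most $k-1$ other text ads exceed $c$, so $p(c)\leq k$. If $p(c)\leq j$, take $j'=j$: the $j$-th ranked value equals $c$ when $p(c)=j$ and is at least $v^T_j$ when $p(c)<j$ (the top $j-1$ other text ads all have value $\geq v^T_j$), so $j\tilde v^T_j(c)\geq\min(jc,jv^T_j)\geq v_1^I/\sqk$. If $p(c)>j$, the inequality $v^T_{j+1}<c^*$ forces $c^*=v^T_{k+1}$, and I take $j'=p(c)$: then $\tilde v^T_{p(c)}(c)=c\geq v^T_{k+1}\geq v_1^I/(j\sqk)\geq v_1^I/(p(c)\sqk)$ (using $j\leq p(c)$), so $p(c)\cdot c\geq v_1^I/\sqk$. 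Either way bidder $i$ lies in the top $\tilde j(c)$ positions and wins.

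For the matching lower bound I show bidder $i$ loses whenever $c<c^*$. If $c<v^T_{k+1}$ then $p(c)\geq k+1$ so bidder $i$ is outside every candidate set. Otherwise $v^T_{k+1}\leq c<v_1^I/(j\sqk)$, forcing $p(c)\geq j$. For every $j'\geq p(c)$ I verify $j'\tilde v^T_{j'}(c)<v_1^I/\sqk$: when $j'>p(c)$ we have $\tilde v^T_{j'}(c)=v^T_{j'}$ (deleting bidder $i$ only shifts other-text indices above her rank, and $i\leq j<j'$) so the bound follows from the maximality of $j$; when $j'=p(c)$ the bound is either $jc<v_1^I/\sqk$ (if $p(c)=j$) or $p(c)c\leq p(c)v^T_{p(c)}<v_1^I/\sqk$ (if $p(c)>j$, using $c\leq v^T_{p(c)}$ and again the maximality of $j$). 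Hence $\tilde j(c)<p(c)$ and bidder $i$ loses. The only bookkeeping subtlety is tracking how the ranks and the relation between $\tilde v^T_{j'}(c)$ and the original $v^T_{\cdot}$ values shift as $c$ crosses the boundary values $v^T_{j+1}$ and $v^T_{k+1}$; everything else reduces to invoking $j$'s definition.
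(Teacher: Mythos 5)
Your proof is correct and follows essentially the same route as the paper: the image case is handled identically via the two thresholds $v_2^I$ and $\Phi^T\cdot\sqk$, and the text case rests on the same key observation that a winner underbidding $v_1^I/(j\cdot\sqk)$ would force the selected index $j'$ to exceed $j$, contradicting its maximality. You are somewhat more thorough than the paper in that you verify both directions (winning above $c^*$ and losing below it) with explicit rank bookkeeping, whereas the paper only argues the lower-bound direction by contradiction; your sub-case $p(c)>j$ in the upper bound is in fact vacuous, but that does not affect correctness.
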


\begin{proof}
Recall that the critical value of each winner is the minimum bid for which she remains a winner fixing the other bidders' bids. 

The case where the winner is an image ad is easy to proof. Note that in this case the winner has value $v^I_1$. The minimum bid in order to remain the winner has to be at least the value of the second highest image ad which is $v^I_2$ and has to be larger than $\Phi^T \cdot \sqk$ to win against text ads. Therefore, the critical value of the winner is $\max\{v^I_2, \Phi^T \cdot \sqk\}$.

Now we consider the case where the winners are the $j$ highest text ads. If $\max\{v^T_{k+1}, v^I_1/(j \cdot \sqk)\}$ is equal to $v^T_{k+1}$, we have $v^I_1/\sqk \leq k \cdot v^T_{k+1}$ hence $j = k$ by the way we select $j$. Hence, the first $k$ text ads win. Moreover, the winners' payments has to be at least $v^T_{k+1}$ in order to be in the first $k$ text ads, therefore, the critical value of the winners is $\max\{v^T_{k+1}, v^I_1/(j \cdot \sqk)\} = v^T_{k+1}$.

If $\max\{v^T_{k+1}, v^I_1/(j \cdot \sqk)\}$ is equal to $v^I_1/(j \cdot \sqk)$, we prove by contradiction that the critical value of winners is $v^I_1/(j \cdot \sqk)$. Lets assume that there exist value $v'$ ($v' < v^I_1/(j \cdot \sqk)$) such that if a winner (W) bids $v'$, she remains in the winning set and hence $v'$ is her critical value. Let $j'$ be the number of winners when W bids $v'$. We know that the value of $\Phi^T$ is at most $j' \cdot v'$ since W is in the winning set. The value of $\Phi^T$ has to be greater than $v^I_1 / \sqk$ in order for text ads to win against image ads. Therefore, we have $j' \cdot v' \geq v^I_1 / \sqk$. On the other hand we have $v' < v^I_1/(j \cdot \sqk)$ which implies $j  \cdot v'< v^I_1/ \sqk$. Hence we conclude that $j' > j$ which contradicts with the fact that $j$ is the largest number such that $j \leq k$ and  $j \cdot v_{j}$ is larger than or equal to $\frac{v^I_1}{\sqk}$.
\end{proof}

Using the previous two lemmas we prove the following theorem and finish this section.

\begin{theorem}
The deterministic Text-and-Image mechanism is $O(\sqk)$-core competitive.
\end{theorem}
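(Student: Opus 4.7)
The plan is to bound the revenue of the mechanism from below by comparing it with the expression for $\cb(F,v)$ provided by Lemma~\ref{lem:ita:corerevenue}. The revenue is determined by the critical values computed in Lemma~\ref{lem:dcv}, so the proof reduces to a case analysis driven by (i)~whether the mechanism allocates to the image or to the text ads and (ii)~whether $\sum_{i=1}^k v^T_i \geq v_1^I$ or not. The central analytic tool throughout will be inequality~\eqref{eq:mert}, namely $\sum_{i=1}^k v^T_i \leq \H_k \cdot \Phi^T = O(\ln k)\cdot \Phi^T$, together with the trivial bound $k \cdot v^T_{k+1} \leq \Phi^T$.

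First I would handle \textbf{Case 1: the image ad wins}, so $v_1^I \geq \Phi^T \cdot \sqk$ and the revenue is at least $\max\{v_2^I, \Phi^T\cdot\sqk\}$. If $\sum v^T_i \geq v_1^I$, then $\cb = \max\{k v^T_{k+1}, v_1^I\}$; the term $k v^T_{k+1} \leq \Phi^T \leq \Phi^T \sqk$ is immediately dominated by the revenue, and the term $v_1^I \leq \sum v^T_i \leq \H_k \Phi^T = O(\sqk)\cdot \Phi^T\sqk$ is dominated up to a $O(\sqk)$ factor. If instead $\sum v^T_i < v_1^I$, then $\cb = \max\{v_2^I, \sum v^T_i\}$; the first term is directly covered by the revenue, and the second is bounded by $\H_k \Phi^T = O(\sqk)\cdot \Phi^T \sqk$ via~\eqref{eq:mert}.

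Next I would handle \textbf{Case 2: a set of $j$ text ads wins}, so $v_1^I < \Phi^T \cdot \sqk$ and the revenue equals $j\cdot \max\{v^T_{k+1}, v_1^I/(j\sqk)\} \geq v_1^I/\sqk$. If $\sum v^T_i < v_1^I$, then $\cb = \max\{v_2^I, \sum v^T_i\} \leq v_1^I \leq \sqk\cdot$~revenue, so we are done. If $\sum v^T_i \geq v_1^I$, then $\cb = \max\{k v^T_{k+1}, v_1^I\}$, and the only subtle bound to establish is $k v^T_{k+1} \leq O(\sqk)\cdot$~revenue. For this I will exploit the maximality of $j$ in the allocation rule: if $j<k$, the maximality ensures that for every $j'$ with $j < j' \leq k$, $j' v^T_{j'} < v_1^I/\sqk$; in particular $k v^T_k < v_1^I/\sqk$, so $k v^T_{k+1} \leq k v^T_k < v_1^I/\sqk \leq $~revenue. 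When $j=k$, either the critical value equals $v^T_{k+1}$ (in which case the revenue is exactly $k v^T_{k+1}$) or the critical value equals $v_1^I/(k\sqk)$, which forces $v^T_{k+1} \leq v_1^I/(k\sqk)$ and again $k v^T_{k+1} \leq v_1^I/\sqk = $~revenue.

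The main obstacle is the last subcase above: it is tempting to think $k v^T_{k+1}$ is unrelated to the image value and so could be much larger than the revenue, but in fact the maximality of $j$ in the allocation rule pins $k v^T_{k+1}$ to be at most $v_1^I/\sqk$, closing the loop. Once all four subcases are verified, taking the worst constant across them yields that the revenue is always at least $\Omega(1/\sqk) \cdot \cb(F,v)$, which is the desired $O(\sqk)$-core-competitiveness.
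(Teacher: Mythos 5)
Your proposal is correct and follows essentially the same route as the paper: a case split on whether the image or the text ads win, using Lemma~\ref{lem:dcv} for the revenue, Lemma~\ref{lem:ita:corerevenue} for $\cb$, and inequality~\eqref{eq:mert} to relate $\sum_i v^T_i$ to $\Phi^T$. If anything you are slightly more careful than the paper, which in each case quotes only one of the two formulas for $\cb$ without checking which group attains the maximum welfare, and which leaves implicit the maximality-of-$j$ argument you spell out for bounding $k v^T_{k+1}$ when $j<k$.
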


\begin{proof}
We prove the theorem by considering two cases: Case (i) when the first image ad wins and Case (ii) when the first $j$ text ads win.

In Case (i) the winner is the image ad with value $v^I_1$ and his payment $\max\{v^I_2, \Phi^T \cdot \sqk\}$ by Lemma \ref{lem:dcv} is the revenue of our deterministic Text-and-Image mechanism. The value of $\cb$ in this case is $\max\{v^I_2, \sum_{i=1}^k v^T_i\}$ by Lemma \ref{lem:ita:corerevenue}. Therefore, using Equation \ref{eq:mert} we conclude that the revenue of our deterministic Text-and-Image mechanism is at least $\sqk$ fraction of $\cb$.

In Case (ii) the winners are the first $j$ text ads. By Lemma \ref{lem:dcv} we know that their critical value is $\max\{ v^T_{k+1}, v_1^I / (j\cdot \sqk)\}$. If their critical value is equal to $v_1^I / (j\cdot \sqk)$ then the total revenue of the mechanism is $v_1^I / \sqk$. If their critical value is equal to $v^T_{k+1}$ then it means that $v^T_{k+1}\geq v_1^I / (j\cdot \sqk)$, hence $j$ is equal to $k$ since  $j$ is the largest $j \leq k$ such that $j \cdot v^T_j \geq v_1^I /\sqk$. Therefore, the total revenue is $k \cdot v^T_{k+1}$. As a result the total revenue in Case (ii)  is $\max\{v_1^I / \sqk, k \cdot v^T_{k+1}\}$. The value of $\cb$ in this case is $\max\{k v^T_{k+1}, v_1^I\}$ by Lemma \ref{lem:ita:corerevenue}. Therefore the revenue of our deterministic Text-and-Image mechanism is at least $\sqk$ fraction of $\cb$.
\end{proof}

\subsection{A $O(\sqk)$ lower bound for deterministic mechanisms}

Now we show that $O(\sqk)$ is necessary for deterministic core-competitive
mechanisms. Formally, we show that no mechanism that is anonymous and satisfies
independence of irrelevant alternatives can provide an approximation ratio
better then $O(\sqk)$. A word of caution: while anonymity and independence of
irrelevant alternatives are commonly used assumptions in lower bounds
for deterministic mechanisms \cite{AshlagiDL12},
they are not completely innocuous as shown by \cite{AggarwalFGHIS11}.

\begin{definition} A mechanism ($\M = (x, p)$) is anonymous if the following
holds. Let $v$ and $v'$ be two valuation profiles that are permutations of each
other (i.e. the set of valuations are the same but the identities of bidders are
permuted). Say $v = \operatorname{permutation}(v')$. If  $x(v) = S_1$ and $x(v')
= S'$, then $S' = \operatorname{permutation}(S)$.  \end{definition}

\begin{definition} \label{def:anonymity} A mechanism ($\M = (x, p)$) satisfies
  independence of irrelevant alternatives if we decrease the bid of a losing
  participant, it does not hurt any winner. More formally, for every valuation
  profile $v$ and loser participant $i \not\in x(v)$, if we decrease the value
  of $i$ from $v^T_i$ to $\hat{v_i}^T < v_i$ then $x(v) \subseteq x(\hat{v_i}^T,
  v_{-i})$.  \end{definition}

\begin{theorem}
Let $M^*$ be a deterministic mechanism with optimum core competitive factor
satisfying anonymity and independence of irrelevant alternatives. Then
there exist a valuation profile for which revenue of $M^*$ is at most $\sqk$ of
$\cb$.
\end{theorem}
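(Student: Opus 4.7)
The plan is to exhibit a one-parameter family of instances parameterized by the image value $t \ge 0$: set $v^T_i = 1/i$ for $i = 1,\ldots,k$, $v^T_{k+1} = 0$, and $v^I_1 = t$, $v^I_2 = 0$. By Lemma \ref{lem:ita:corerevenue}, the core benchmark on this family equals $\cb(t) = \min(t, \H_k)$. Since the text values $1, 1/2, \ldots, 1/k$ are all distinct, anonymity together with Myerson monotonicity forces the mechanism's text winners (when text is allocated) to form a sorted prefix $\{1,\ldots,j(t)\}$; and monotonicity in the image's own bid gives a single threshold $t^\star := \inf\{t : M^*$ allocates the image$\}$.

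First I would handle the large-$t$, image-wins regime. For any $t > t^\star$ the image's critical bid equals $t^\star$, so the revenue is exactly $t^\star$. Taking any $t \ge \H_k$, Lemma \ref{lem:ita:corerevenue} gives $\cb(t) = \H_k$, and $\alpha$-core-competitiveness forces
\[
t^\star \;\ge\; \H_k/\alpha.
\]

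Next I would handle the just-below-threshold, text-wins regime. At $t = t^\star - \epsilon$ the mechanism selects the prefix $\{1,\ldots,j^\star\}$ and the benchmark is $\cb = t^\star - \epsilon$. Using IIA, zero out the losing text ads $i > j^\star$; the winning set and each winner's critical bid $c_i$ are unchanged (by the usual monotone/Myerson payment identity applied after IIA). The heart of the argument is then the bound $\sum_{i=1}^{j^\star} c_i \le \Phi^T = 1$. The plan is to leverage anonymity to interpret each $c_i$ as a function of rank rather than identity: if a winner at rank $r \le j^\star$ lowers its bid to $1/(j^\star+1) + \delta$, its new rank in the sorted vector becomes exactly $j^\star$, so it remains inside the top-$j$ set. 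Monotonicity of each agent's own bid plus anonymity (forcing the winning set to be a top prefix of the sorted vector) rules out the only bad case—namely that the mechanism shrinks $j$ precisely when a winner's bid is lowered in a range that keeps it ranked at most $j^\star$—because that would violate Myerson monotonicity for the agent whose bid was lowered. Hence $c_i \le 1/(j^\star+1)$ for every $i \le j^\star$, giving $\sum_i c_i \le j^\star/(j^\star+1) \le 1 = \Phi^T$.

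Finally I would combine the two bounds. At $t = t^\star - \epsilon$, $\alpha$-competitiveness forces
\[
1 \;\ge\; \sum_{i=1}^{j^\star} c_i \;\ge\; \frac{t^\star - \epsilon}{\alpha} \;\ge\; \frac{\H_k/\alpha - \epsilon}{\alpha},
\]
and letting $\epsilon \to 0$ yields $\alpha^2 \ge \H_k$, i.e. $\alpha \ge \sqk$, as required. The main obstacle is the rank-based critical-value bound in the middle step: anonymity alone pins down the winning set as a function of the sorted value vector, but one must use IIA and Myerson monotonicity together to rule out adversarial discontinuities in $j(t)$ when an individual winner's value is perturbed downward within the rank window $[v^T_{j^\star+1}, v^T_1]$; handling this carefully is where the argument does its real work.
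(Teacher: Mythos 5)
Your overall architecture (a one-parameter family in the image value $t$, a threshold $t^\star$, and a two-regime analysis) is a legitimate alternative to the paper's construction, and your first regime is fine: for $t>\max(t^\star,\H_k)$ the image pays its critical bid $t^\star$ while $\cb=\H_k$, forcing $t^\star\geq \H_k/\alpha$. The problem is the step you yourself identify as the heart of the argument: the claim that every text winner's critical value satisfies $c_i\leq 1/(j^\star+1)$ because dropping a winner after it lowers its own bid ``would violate Myerson monotonicity.'' This is backwards. Monotonicity forbids a winner from losing when it \emph{raises} its bid; it places no constraint whatsoever on what happens when a winner \emph{lowers} its bid, and IIA does not apply either (IIA only concerns losers lowering their bids). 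The paper's own deterministic mechanism is a concrete counterexample to your reasoning: it is anonymous, monotone, and IIA, yet a text winner who lowers its bid below $v^I_1/(j\sqk)$ is dropped even though its rank stays within the top $j$, and its critical value $v^I_1/(j\sqk)$ strictly exceeds $v^T_{j+1}$. So the bound $\sum_i c_i\leq \Phi^T$ is left unproven (and even if it happens to hold, your argument does not establish it). A secondary issue: IIA only guarantees the winning set does not shrink when losers are zeroed out; it does not preserve critical values, and zeroing the losing text ads can also change $\cb$ of the profile (if $\H_{j^\star}<t$, the benchmark drops from $\approx t$ to $\max\{v^I_2,\H_{j^\star}\}$), so you cannot freely transfer the revenue bound between the original and modified profiles.

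The paper circumvents exactly this difficulty by never trying to bound the critical values at the profile $(1,1/2,\dots,1/k)$ directly. Instead it builds a chain of profiles $v^{(1)},\dots,v^{(k)}$, raising each text ad that is currently winning to value $1$ and lowering each current loser to $0$, and takes $j$ to be the last index whose ad still wins in $v^{(j)}$. In $v^{(j)}$ all winners lie in $\{1,\dots,j\}$ and all have the identical value $1$, so anonymity (via the permutation argument with an intermediate value $x$) forces them to pay the same amount; Myerson's characterization bounds the payment of the last-raised winner by its pre-raise value $1/j$, giving total revenue at most $1$ against a benchmark of $\sqk$ supplied by a second image ad of value $\sqk$. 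If you want to salvage your route, you would need to replace the rank-window argument with something like this equalize-then-compare device; as written, the middle step does not go through.
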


\begin{proof}
Let valuation profile $v$ consists of $k$ text ads  $\{v^T_1, v^T_2, \ldots
v^T_k\}$ where value $v^T_i$ is equal to $1/i$ and $2$ image ads $\{v^I_1,
v^I_2\}$ both with value $\sqk$. Now we consider two cases: 

{\bf Case (i) $M^*$ allocates to an image ad.} 
Note that the revenue of $\M^*$ is the payment of the winner and
is at most $\sqk$. Now, lets  increase the valuation of the winner to $\ln(k)$
and build a new valuation profile $v'$. Note that by Lemma \ref{thm:tf} the
winner and his payment in $v'$ remains the same as in $v$. Therefore, the
revenue of $v'$ is $\sqk$ while its $\cb$ by Lemma \ref{lem:ita:corerevenue} is
$\ln(k)$.\\

{\bf Case (ii) $M^*$ allocates to a set of text ads .}
We build a group of $k$ valuation profiles $v^{(1)}, \ldots,
v^{(k)}$ and show that in at least one of them the difference between $\cb$ and
revenue of $\M^*$ is $\sqk$. valuation profile $v^{(1)}$ is the same as $v$ and
we build $v^{(i+1)}$ from $v^{(i)}$ by the following procedure. If text ad
$v^T_{i+1}$ is a winner in $v^{(i)}$ then we obtain $v^{(i+1)}$ by increasing
value of $v^T_{i+1}$ to one in $v^{(i)}$. Otherwise,  if text ad $v^T_{i+1}$ is
a loser in $v^{(i)}$ then we obtain $v^{(i+1)}$ by decreasing value of
$v^T_{i+1}$ to zero in $v^{(i)}$.

Let $j$ be the largest number such that $j\leq k$ and text ad $v^T_j$ is a
winner in $v^{(j)}$. Now we claim that every text ad $j'$ where $j' > j$ is a
loser in $v^{(j)}$. Otherwise, if such $j'$ exist then $j'$ will also be a
winner in $v^{(j')}$ since by independence of irrelevant alternative $j'$
remains a winner in all valuation profiles $v^{(\ell)}$ for $j < \ell < j'$.
This contradicts with the fact that $j$ is the largest number. Therefore, we
know that in valuation profile $v^{(j)}$ all the winners are between $1$ and $j$
and hence we have at most $j$ winners. Note that $v^{(j)}$ is obtained from
$v^{(j-1)}$ by increasing the value of $v^T_j$ from $1/j$ to $1$ and by Lemma
\ref{thm:tf} his payment is at most $1/j$. Also, all the winners in $v^{(j)}$
have valuation $1$, so we claim that all the winners should pay
the same amount. Before proving the claim, we show that this is enough
to finishes the proof in this case. Mechanism $\M^*$ at valuation profile
$v^{(j)}$ has at most $j$ winners each paying at most $1/j$, therefore,
the revenue of $\M^*$ is $1$ while $\cb$ of $v^{(j)}$ is
$\sqk$ (by Lemma \ref{lem:ita:corerevenue}).\\

We finish this section by proving the claim that the payments of winners of $\M^*$
at valuation profile $v^{(j)}$ are all the same.  Assume otherwise and let $a$ and
  $b$ be two text ads in the valuation profile $v$ where both are winners but
  they pay different amounts. w.l.o.g. assume $p_a < p_b$. Lets pick value $x$
  such that $p_a < x < p_b$ and $x$ be different than all the valuations in
  $v^{(j)}$. Note that such $x$ exists since there are finite number of bidders
  in $v^{(j)}$ but infinitely many numbers in range $(p_a, p_b)$. Now if we
  decrease the valuation of bidder $v^T_a$ from $1$ to $x$ and obtain valuation
  profile $A$ she remains a winner by Lemma \ref{thm:tf}. If we decrease the
  valuation of bidder $v^T_b$ from $1$ to $x$ and obtain valuation profile $B$
  she does not remain a winner by Lemma \ref{thm:tf}. Note that the single
  bidder in $A$ with valuation $x$ is a winner but the single bidder with
  valuation $x$ in $B$ is not a winner while $A$ and $B$ are permutations of
  each other. This contradicts with anonymity (see Definition \ref{def:anonymity}) of
  $\M^*$.  \end{proof}

\section{A randomized $O(\lln)$-core competitive
mechanism}\label{sec:rand-ita}

In this section we improve the $O(\sqk)$-core competitive mechanism presented in
the last section with the use of randomization. Recall that in the deterministic
mechanism we decide on allocating to text or image ads based on the ratio $v_1^I
/ \Phi^T$ being above or below $\sqk$. If we allow randomness, we can decide a
threshold as a random function of this ratio.  Optimizing the revenue as a
function of this distribution, we obtain the following mechanism:

\begin{framed} {\bf Allocation rule:} Consider the ration $\psi = v^I_1 /
\Phi^T$: \begin{itemize} \item[$\star$] if $\psi \leq 2$ allocate the items to
      the $j$ largest text ads, where $j$ is the largest number such that $j
      v^T_j \geq v^I_1/2$.  \item[$\star$] if $2 < \psi$, allocate to the
    highest valued image ad with probability $\min\{1, \ln(\psi) / \ln(\ln
k)\}$. With the remaining probability, leave the items unallocated.\\
\end{itemize}

{\bf Pricing rule:} Allocated bidders are charged according to Myerson's
integral.  \end{framed}

In the following lemma we calculate the critical values of winners and total
revenue of our randomized mechanism.  \begin{lemma} \label{lem:revr} The revenue
  of our mechanism is the following.  \begin{equation*} \sum_i p_i(v) =
    \begin{cases} \max\{k \cdot v^T_{k+1}, v_1^I/2\} & \text{case (i): } \psi <
    2 \\ (v^I_1 + 2 \Phi^T \ln(2) - 2\Phi^T) / \lln & \text{case (ii): } 2 \leq
    \psi \leq \ln(k)\\ (\ln(k) \cdot \Phi^T + 2 \Phi^T \ln(2) - 2\Phi^T) / \lln
    &  \text{case (iii): } \psi > \ln(k) \end{cases} \end{equation*}
	
\end{lemma}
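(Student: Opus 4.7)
The plan is to treat the three cases separately. In case (i), where $\psi\le 2$, the mechanism is structurally identical to the deterministic Text-and-Image mechanism of the previous section with the scaling factor $\sqk$ replaced by the constant $2$. I would rerun the argument of Lemma \ref{lem:dcv} essentially verbatim: each of the $j$ winning text ads has critical value $\max\{v^T_{k+1},\, v^I_1/(2j)\}$. If this maximum is $v^T_{k+1}$, the rule defining $j$ forces $j=k$ (from $v^T_{k+1}\ge v^I_1/(2j)$ and $j\le k$ one gets $k\,v^T_k\ge k\,v^T_{k+1}\ge v^I_1/2$, so the index $k$ is itself admissible), producing total revenue $k\cdot v^T_{k+1}$; otherwise each of the $j$ winners pays $v^I_1/(2j)$ and summing yields $v^I_1/2$. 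Combining these gives $\max\{k\cdot v^T_{k+1},\, v^I_1/2\}$, exactly the claimed case (i) revenue.

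In cases (ii) and (iii), where $\psi>2$, I would first observe that no text ad is ever allocated with positive probability, so Lemma \ref{thm:tf} immediately forces every text ad's payment to be $0$, and the total revenue equals the expected payment of the top image ad. Fixing all other bids, the image ad's allocation probability as a function of its own reported value $u$ is the piecewise function
\[ x(u) \;=\; \begin{cases} 0 & \text{if } u\le 2\Phi^T,\\ \ln(u/\Phi^T)/\lln & \text{if } 2\Phi^T < u \le \Phi^T\ln k,\\ 1 & \text{if } u > \Phi^T\ln k. \end{cases} \]
Plugging this into the payment formula $p = v^I_1\cdot x(v^I_1) - \int_0^{v^I_1} x(u)\,du$ of Lemma \ref{thm:tf} reduces the whole computation to a definite integral of $\ln(u/\Phi^T)$, whose antiderivative is $u\ln(u/\Phi^T) - u$.

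In case (ii), $v^I_1 = \psi\Phi^T$ with $2\le\psi\le\ln k$, and the $\psi\ln\psi$ term coming from the antiderivative cancels the surface contribution $v^I_1\cdot x(v^I_1) = \Phi^T\psi\ln\psi/\lln$ exactly, leaving $(v^I_1 + 2\Phi^T\ln 2 - 2\Phi^T)/\lln$. In case (iii) I would split the integral at $u=\Phi^T\ln k$: the lower piece is the case (ii) integral with $\psi$ replaced by $\ln k$, and the upper piece contributes exactly $v^I_1 - \Phi^T\ln k$; together with $v^I_1\cdot x(v^I_1) = v^I_1$ these telescope both $v^I_1$ and $\Phi^T\ln k$ away, leaving $(\Phi^T\ln k + 2\Phi^T\ln 2 - 2\Phi^T)/\lln$. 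The only real obstacle is arithmetic bookkeeping --- the $\psi\ln\psi$ (respectively $\ln k\cdot\lln$) cross-terms must cancel cleanly between the surface term and the integral --- and the allocation probability $\ln(\psi)/\lln$ is in fact chosen precisely to engineer this cancellation, so that the image ad's expected payment ends up linear in $v^I_1$ throughout case (ii).
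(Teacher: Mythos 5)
Your proposal is correct and follows essentially the same route as the paper: case (i) by re-deriving the critical value $\max\{v^T_{k+1}, v^I_1/(2j)\}$ exactly as in Lemma \ref{lem:dcv} with $\sqk$ replaced by $2$, and cases (ii)--(iii) by plugging the image ad's allocation curve into Myerson's payment formula and integrating $\ln(u/\Phi^T)$. The only cosmetic difference is in case (iii), where the paper invokes the standard fact that the payment of a bidder whose winning probability has saturated at $1$ equals the payment at the saturation point $\ln(k)\cdot\Phi^T$, while you split the integral there and verify the same cancellation by hand.
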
 \begin{proof} We consider the following three cases.

{\bf Case (i).} In this case we have $j$ text winners.
  We prove that the critical value of each of them is at least $v_1^I/(2j)$.
  Suppose not and assume that the critical value of text ad $A$ is $v_A'$ where
  $v_A' < v_1^I/(2j)$. This means that when $A$ bids $v_A'$ she still remains a
  winner. Therefore, there exists a number $j'$ such that  
  \begin{equation}
 j' \cdot v_A' > v_1^I/2   \label{eq:jrm1}
  \end{equation} 
  in order for text ads to win against image ads. Using $v_A' < v_1^I/(2j)$ and Equation \ref{eq:jrm1}
  we conclude that $j' > j$ which contradicts with the fact that $j$ is the
  largest number that $j v^T_j > v_1^I/2$. Therefore the critical value of each
  of the $j$ text winners is at least $v_1^I/(2j)$. Moreover if $k v^T_{k} >
  v_1^I/2$ then each winner's critical value must be more than $v^T_{k+1}$ in
  order to be in the winning set. Therefore, the critical value of the winners
  is equals to $\max\{v^T_{k+1}, v_1^I/(2j)\}$ and the total revenue in this
  case is $\max\{k \cdot v^T_{k+1}, v_1^I/2\}$.\\
 
{\bf Case (ii).} In this case the image ad with largest valuation $v^I_1$ wins
and his expected payment is the expected total revenue of our mechanism.
\begin{align*} p(v^I_1) &= v^I_1 x^I_1(v^I_1, v_{-1}) - \int_{2\Phi^T}^{v^I_1}
x^I_1(u, v_{-1}) du & \text{Lemma \ref{thm:tf}}\\ &= v^I_1
\ln(v^I_1/\Phi^T)/\lln - \int_{2\Phi^T}^{v^I_1} \ln(u/\Phi^T)/\ln \ln (k) du &
\text{replacing} x^I_1\\ &= v^I_1 \ln(v^I_1/\Phi^T)/\lln - \left[ (u \cdot
\ln(u/\Phi^T) - u)/\lln \right]_{2\Phi^T}^{v^I_1}   & \text{solving the
integral}\\ &= (v^I_1 + 2 \Phi^T \ln(2) - 2\Phi^T) / \lln \end{align*}\\

{\bf Case (iii). } Note that if $v^I_1$ is larger than $\ln(k) \cdot \Phi^T$
then her probability of winning is one. Therefore, his payment will be the same
as when her valuation is $\ln(k) \cdot \Phi^T$. Therefore, using case (ii) the
payment $v^I_1$ in this case is $(\ln(k) \cdot \Phi^T + 2 \Phi^T \ln(2) -
2\Phi^T) / \lln$.  \end{proof}


\begin{theorem} Core competitive factor of randomized Image-and-Text mechanism
is $\max\{2, 1.43\cdot \lln\}$.  \end{theorem}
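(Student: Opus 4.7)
The plan is to case-split on $\psi = v_1^I/\Phi^T$ following the three regimes of Lemma \ref{lem:revr} and, in each, compare the stated revenue to the core benchmark $\cb(F,v)$ characterized in Lemma \ref{lem:ita:corerevenue}. The comparison will rely on three simple observations: $k v_{k+1}^T \leq \Phi^T$ (immediate from the definition of $\Phi^T$), $\sum_{i=1}^k v_i^T \leq \H_k \cdot \Phi^T$ (equation \eqref{eq:mert}), and $v_2^I \leq v_1^I$. Taken together these yield the uniform upper bound $\cb(F,v) \leq \max\{v_1^I, \H_k \Phi^T\}$, and the sharper bound $\cb(F,v) \leq v_1^I$ whenever $v_1^I \geq \Phi^T$ (since then the $k v_{k+1}^T$ and $v_2^I$ terms are dominated by $v_1^I$, and in the branch where $\sum v_i^T \geq v_1^I$ we get $\cb \leq \max\{\Phi^T, v_1^I\} = v_1^I$).

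In Case (i) ($\psi \leq 2$), Lemma \ref{lem:revr} gives revenue $\max\{k v_{k+1}^T, v_1^I/2\}$. Since $v_1^I \leq 2\Phi^T$, a direct inspection of the two formulas in Lemma \ref{lem:ita:corerevenue} shows $\cb \leq \max\{k v_{k+1}^T, v_1^I\}$, and this is at most twice the revenue. This delivers the factor $2$ half of the $\max$.

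In Case (ii) ($2 < \psi \leq \ln k$), we have $v_1^I > 2\Phi^T \geq \Phi^T$, so the refined bound $\cb \leq v_1^I$ applies. Writing the revenue from Lemma \ref{lem:revr} as $\Phi^T\bigl(\psi - 2(1-\ln 2)\bigr)/\lln$, the ratio becomes
\[
\frac{\cb}{\mathrm{Rev}} \;\leq\; \frac{\psi}{\psi - 2(1-\ln 2)} \cdot \lln,
\]
which is decreasing in $\psi$, so it is maximized at $\psi = 2$ and equals $(1/\ln 2)\cdot \lln \approx 1.44\,\lln$. This handles the middle range.

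Case (iii) ($\psi > \ln k$) is the subtle one: the revenue is the constant $\Phi^T(\ln k - 2(1-\ln 2))/\lln$, yet $v_1^I$ is now unbounded, so $\cb \leq v_1^I$ alone no longer suffices. I would split once more. If $\sum v_i^T \geq v_1^I$, then $\cb = v_1^I \leq \sum v_i^T \leq \H_k \Phi^T$. If $\sum v_i^T < v_1^I$, so that $\cb = \max\{v_2^I, \sum v_i^T\}$, I would argue that we may assume without loss $v_2^I \leq 2\Phi^T$, since otherwise the Myerson integral defining the image ad's payment starts at $v_2^I$ rather than $2\Phi^T$ and the actual revenue only goes up; this reduces $\cb$ again to at most $\H_k \Phi^T$. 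In both subcases the ratio is at most $\H_k \lln/(\ln k - 2(1-\ln 2))$. Asymptotically this tends to $\lln$, and the final step is a small arithmetic verification that $\H_k/(\ln k - 2(1-\ln 2)) \leq 1.43$ once $\lln \geq 2/1.43$ (so that $1.43\lln \geq 2$), and for smaller $k$ the looser constant $2$ absorbs the slack. The main obstacle I expect is precisely this endgame: tightly chasing the constant $1/\ln 2$ in case (ii) and cleanly handling the $v_2^I > 2\Phi^T$ subcase in case (iii); the rest is bookkeeping on top of Lemmas \ref{lem:ita:corerevenue} and \ref{lem:revr}.
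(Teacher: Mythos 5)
Your proposal takes essentially the same route as the paper---case-split on $\psi$ via Lemma \ref{lem:revr} and compare against Lemma \ref{lem:ita:corerevenue}---and cases (i) and (ii) match the paper's argument almost line for line (your constant $1/\ln 2 \approx 1.44$ in case (ii) is in fact the exact value this argument yields; the paper's $1.43$ comes from rounding $2-2\ln 2$ down to $0.61$). The genuine difference is case (iii), and there your version is \emph{more} complete than the paper's own: the paper bounds $\cb \leq v_1^I$ and declares the rest ``similar to case (ii),'' but in this regime the revenue formula of Lemma \ref{lem:revr} is capped near $\ln(k)\,\Phi^T/\lln$ while $v_1^I$ is unbounded, so that comparison does not close. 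Your fix---bound $\cb$ by $\H_k \Phi^T$ via Equation \ref{eq:mert} and note that a large $v_2^I$ pushes the true Myerson payment above the formula of Lemma \ref{lem:revr}---is the right repair. Two points to tighten: the ``without loss $v_2^I \leq 2\Phi^T$'' step is not literally a WLOG, because raising $v_2^I$ raises $\cb$ as well as the revenue; you should check explicitly that for $v_2^I \in (2\Phi^T, \ln(k)\Phi^T]$ the payment still dominates the case-(iii) formula while $\cb \leq \H_k\Phi^T$, and that for $v_2^I > \ln(k)\Phi^T$ the payment equals $v_2^I$, which absorbs the $v_2^I$ term of $\cb$. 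And the closing inequality $\H_k/(\ln k - 2(1-\ln 2)) \leq 1.43$ holds only for $k$ past a moderate threshold, so the small-$k$ verification against the constant $2$ that you flag is genuinely needed, not optional.
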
 \begin{proof} We prove the theorem by
considering three cases similar to Lemma \ref{lem:revr}.

{\bf Case (i) : $\psi < 2$.} By Lemma \ref{lem:ita:corerevenue} we know that if $\sum_1^k
v^T_i \geq v^I_1$ then $\cb$ is equal to $\max\{k v^T_{k+1}, v_1^I\}$. As the
revenue of our mechanism in this case is $\max\{k \cdot v^T_{k+1}, v_1^I/2\}$
(by Lemma \ref{lem:revr}) the proof of the lemma follows. If $\sum_1^k v^T_i <
v^I_1$ then by Lemma \ref{lem:ita:corerevenue} we know that $\cb = \max\{v^I_2,
\sum_{i=1}^k v^T_i\}$ which is at most $v^I_1$. Therefore, the core competitive
factor for this case is $2$ and the proof of the lemma follows.\\

{\bf Case (ii) : $2 \leq \psi \leq \ln(k)$.}
By Lemma \ref{lem:ita:corerevenue} we know that if $\sum_1^k
v^T_i \geq v^I_1$ then $\cb$ is equal to $\max\{k v^T_{k+1}, v_1^I\}$. As
$\Phi^T \geq k v^T_{k+1}$ and $v^I_1 \geq 2\Phi^T$, we conclude that $\cb$ is at
most $v^I_1$. If $\sum_1^k v^T_i < v^I_1$ then by Lemma
\ref{lem:ita:corerevenue} we know that $\cb = \max\{v^I_2, \sum_{i=1}^k v^T_i\}$
which is at most $v^I_1$. Therefore, in this case $\cb$ is at most $v^I_1$.  The
revenue of our mechanism in this case is $(v^I_1 + 2 \Phi^T \ln(2) -
2\Phi^T)/\lln \simeq (v^I_1 - 0.61\Phi^T)/\lln$ (by Lemma \ref{lem:revr}). As
$v^I_1 \geq 2\Phi^T$, the revenue of our mechanism is at least $(v^I_1 -
0.61v^I_1/2)/\lln = 0.695 v^I_1 / \lln$, hence it is at least $0.695/ \lln$
fraction of $\cb$ (\ie{}     $1.43 \cdot \lln$-core competitive) and the proof
of the lemma follows.\\

{\bf Case (iii): $\psi > \ln(k)$.} 
In this case we have $v^I_1 > \ln(k) \cdot \Phi^T$ which by
Equation \ref{eq:mert} implies $v^I_1 \geq \sum_1^k v^T_i$. Hence the $\cb$ in
this case is $\max\{v^I_2, \sum_{i=1}^k v^T_i\}$ which is at most $v^I_1$. The
rest of the proof is similar to case (ii) and the core competitive factor for
this case is at least $1.43\cdot \lln$.  \end{proof}

\section{A Lower Bound for Revenue of Randomized Mechanisms in Image-and-Text setting}
In this section we prove lower bound of $\Omega(\lln)$ for core-competitive factor of
randomized mechanisms. The structure of the proof is as follows. Let assume
$\R^* = (x^*, p^*)$ to be a truthful randomized mechanism (satisfying conditions
of Lemma \ref{thm:tf}) with optimum core-competitive factor. \comment{
Note that here we prove for the mechanisms that are truthful in expectation
which are more general than truthful in universal sense.}
We derive a distribution over valuation profiles for the Text-and-Image
setting such that the expected revenue of $\R^*$
is at most $2$ and the expected value of $\cb$ is $\Omega(\lln)$. Therefore, we
conclude that for at least one of the valuation profiles in the support of
$\alpha$, $R^*$ yields a revenue that is smaller than core revenue by factor
$\Omega(\lln)$.\\

{\bf A distribution over valuation profiles}. Given $k$ text ads and one image
ad, define a distribution $\D$ over valuation profiles as the following.
The value of each text ad is taking iid from the set
$\{1, \frac{1}{2}, \ldots, \frac{1}{k}\}$, each element has probability
$\frac{1}{k}$. The value of the image ad is taken from set $\{H,
\frac{H}{2}, \ldots, \frac{H}{H} \}$ were each element has probability
$\frac{1}{H}$, where $H = \cH$.\\

In the following lemma we prove that the expected revenue of $\R^*$ is at most
$2$.

\begin{lemma} \label{lem:exprev} The expected revenue of $\R^*$ for $\alpha$ is
at most $2$.  \end{lemma}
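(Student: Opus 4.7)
The plan is to bound the expected payment of each bidder individually using Myerson's payment identity (Lemma \ref{thm:tf}) and then combine via linearity of expectation. The key observation driving everything is that both components of $\D$ are discrete \emph{equal-revenue} distributions: for the image, posting any price $H/i$ in the support yields expected revenue $(H/i)\cdot(i/H)=1$; for each text ad, posting any price $1/j$ yields $(1/j)\cdot(j/k) = 1/k$. Intuitively, no truthful single-bidder mechanism can beat this common per-distribution revenue, so each bidder contributes a bounded amount to the expected total payment.

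To make this rigorous, I would first fix the valuations $v_{-i}$ of all other bidders and define the interim allocation $\bar{x}_i(v_i):=\E_{v_{-i}}[x^*_i(v_i,v_{-i})]$ seen by bidder $i$ in $\R^*$. Truthfulness of $\R^*$ (Lemma \ref{thm:tf}) implies that $\bar{x}_i$ is monotone non-decreasing and that the interim expected payment satisfies $\bar{p}_i(v_i) = v_i\bar{x}_i(v_i) - \int_0^{v_i}\bar{x}_i(u)\,du$. Averaging over $v_i$ and performing summation by parts (the discrete analogue of the continuous virtual-value computation) lets me rewrite this as
\begin{equation*}
\E_{v_i}[\bar{p}_i(v_i)] \;=\; \sum_{j}\, f_j\,\phi_j\,\bar{x}_i(v_j), \qquad \phi_j := v_j - \frac{F^{>}(v_j)}{f_j}(v_{j-1}-v_j),
\end{equation*}
where $F^{>}(v_j)=\Prob[V>v_j]$ and the support is indexed as $v_1 > v_2 > \cdots$.

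Next I would evaluate these virtual values $\phi_j$ for the two distributions appearing in $\D$. For the image, $v_j=H/j$ and $f_j=1/H$: a direct telescoping calculation gives $\phi_j=0$ for all $j\ge 2$ and $\phi_1=H$, so the image's expected payment is at most $f_1 v_1 \bar{x}^I_1(H) \le 1$. For each text ad, $v_j=1/j$ and $f_j=1/k$ give analogously $\phi_j=0$ for $j\ge 2$ and $\phi_1=1$, so each text ad's expected payment is at most $1/k$. Finally, using linearity of expectation together with the independence of bidders' values under $\D$, summing over the image ad and the $k$ text ads yields $\E_{v\sim\D}[\sum_i p_i(v)] \le 1 + k\cdot(1/k) = 2$, which is the desired bound.

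I do not anticipate a substantial obstacle. The only subtle point is carrying out the discrete Myerson calculation correctly, since integration by parts must be replaced by summation by parts over the support points; but the equal-revenue structure of both distributions makes all intermediate virtual values collapse to zero, so the resulting bound is almost immediate once the discrete identity is in hand.
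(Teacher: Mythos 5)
Your proposal is correct and takes essentially the same route as the paper: both decompose the revenue per agent and exploit the fact that each marginal of $\D$ is an equal-revenue distribution, so no truthful mechanism can extract more than $1$ from the image ad or $1/k$ from each text ad in expectation. The paper phrases this as ``a random threshold cannot beat the best fixed threshold, i.e.\ $\max_u u(1-F(u))$,'' while you carry out the equivalent discrete virtual-value computation explicitly (note only that your displayed identity should really be an inequality $\leq$ unless $\bar{x}_i$ is constant between support points, which is the direction you need anyway).
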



\begin{proof}
  From the perspective of any given player, a randomized mechanism can be seen
  as a random threshold being offered to $i$ as a function of $v_{-i}$. So the
  revenue that can be extracted from each agent $i$ in expectation, is the
  revenue that can   be extracted from $i$ by using a random threshold, which
  is the maximum   revenue that can be obtained from any given player by a
  fixed threshold (since   the revenue from a random threshold is the
  expectation   of revenue that can be obtained from a fixed
  threshold)\footnote{Here is a simple mathematical derivation of those
    arguments for differentiable allocation function $x(v)$ (since monotone functions are almost-everywhere differentiable,
    the same argument can be  easily extended just by performing the equivalent
    calculations on  discontinuities) given an allocation $x(x)$, let
    $\hat{x}(v_i) = \E_{v_{-i}}x(v_i, v_{-i})$, then the expected revenue that
    can be extracted from agent $i$ with distribution $F$ is given by
    $p_i = \E_{v_i} [\int_0^{v_i} u \cdot \partial
\hat{x}(u) du = \int_0^{\infty} \int_0^{v_i} u \cdot \partial \hat{x}(u) du dF(v)
 $. Inverting the order of the integration we get:  $p_i = \int_0^{\infty}
 \int_u^\infty  u \cdot \partial \hat{x}(u) dF(v) du = \int_0^{\infty}  u \cdot
 \partial \hat{x}(u) (1-F(u)) du  \leq \max_u [u \cdot (1-F(u)) ] \cdot
 \int_0^\infty \partial \hat{x}(u) du \leq \max_u [u \cdot (1-F(u)) ]$, which
 is the maximum revenue obtained from a single threshold.}.

  It is simple to see that under $\D$ the best revenue that can be obtained by a
  single threshold from any given text ad is $1/k$ and the revenue that can be
  obtained from an image is $1$. So, the total revenue is at most $k \cdot
  \frac{1}{k} + 1 = 2$.
\end{proof}

\comment{

\begin{proof} Let $i$ be an arbitrary text ad with
  valuation $\alpha_i$.  First we upper bound the maximum expected revenue that
  $\R^*$ can extract from $i$. Recall that the valuation of $i$  is taken
  independently of the other bidders at random from set $\{1, \frac{1}{2},
  \ldots, \frac{1}{k}\}$. By Lemma \ref{thm:tf}, the probability function
  $x^*_i\left(\alpha_i, \alpha_{-i}\right)$ is monotone in $\alpha_i$. Note that
  since $\alpha_i$ is independent of $\alpha_{-i}$, allocation rule $x^*_i$
  cannot benefit from considering the values of $\alpha_{-i}$. Here we prove a
  stronger claim and show that any monotone probability function cannot extract
  revenue more than $\frac{1}{k}$ from text ad $i$. This is stronger because we
  only constraint on monotonicity of the probability function, however, $x^*$
  cannot simultaneously select both text ad $i$ and the image ad as winners.

Let $x^*_i:\R^+ \rightarrow [0..1]$ be an arbitrary monotone probability
function and $a_1 = x^*_i(\frac{1}{1}), a_2 = x^*_i(\frac{1}{2}), \ldots, a_k =
x^*_i(\frac{1}{k})$. The following inequality is a consequence of monotonicity
of $x^*_i$.  \begin{equation} \label{ineq:intlb} \int_{1/(q+1)}^{1/q} x^*_i(t)
dt \geq \frac{1}{q \cdot (q+1)} a_{q+1} \end{equation} The proof is easily
followed by replacing $x^*_i(t)$ with $a_{q+1}$ which is its minimum value in
the range of the integral.

The following inequalities upper bounds the expected revenue of $x^*_i$ for
$\alpha_i$ ($p_i(\alpha_i)$).  \begin{align*} \E[p_i(\alpha_i)] &= \sum_{j =
1}^{k} \frac{1}{k} p_i(\frac{1}{j})\\ &= \frac{1}{k} \sum_{j = 1}^{k}
\frac{1}{j} x^*_i(\frac{1}{j}) - \int_{0}^{\frac{1}{j}} x^*_i(t) dt \\ &\qquad
\text{Lemma \ref{thm:tf}}\\ &= \frac{1}{k}  \sum_{j = 1}^{k}\left( \frac{1}{j}
x^*_i(\frac{1}{j}) - \left(\int_{0}^{1/k} x^*_i(t) dt + \sum_{q = j}^{q = k-1}
\int_{1/(q+1)}^{1/q} x^*_i(t) dt\right) \right)\\ &\qquad  \text{Breaking the
domain of }\int_{0}^{\frac{1}{j}} x^*_i(t) dt \\ &= \frac{1}{k}
  \left(\left(\sum_{j = 1}^{k} \frac{1}{j} x^*_i(\frac{1}{j})\right) - \left(k
  \cdot \int_{0}^{1/k} x^*_i(t) dt + \sum_{q = 1}^{q = k-1} q \cdot
  \int_{1/(q+1)}^{1/q} x^*_i(t) dt\right)\right)\\ &\qquad  \text{rearranging
the two inner sums} \\ &\leq \frac{1}{k} \left(\left(\sum_{j = 1}^{k}
\frac{1}{j} a_j \right) - \left( \sum_{q = 1}^{q = k-1} q \cdot \frac{1}{q \cdot
(q+1)} a_{q+1}\right)\right)\\ &\qquad  \text{Replacing $x^*_i(\frac{1}{j})$
  with $a_j$ and $\int_{1/(q+1)}^{1/q} x^*_i(t) dt$ with }\\ &\qquad  \text{ its
  minimum value (see Equation \ref{ineq:intlb})} \\ &\leq \frac{1}{k} a_1\\
                                                    &\qquad
    \text{Simplification}\\ &\leq \frac{1}{k} &\\ &\qquad \text{As $a_1 =
x^*_i(1)$ can be at most $1$} \end{align*}

With exactly similar arguments as the above inequalities we can conclude that
the expected payment of the image ad is at most $1$. Therefore, by the sum of
expectations we conclude that the expected revenue of $\R^*$ is $k \cdot
\frac{1}{k} + 1 = 2$.  \end{proof}
}
\begin{lemma} \label{lem:expcore} The expected value of the core revenue
  benchmark is doubly-logarithmic:  $\E_{v \sim \D }\cb(v)
\geq \Omega( \lln )$. \end{lemma}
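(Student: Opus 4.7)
The plan is to use Lemma \ref{lem:ita:corerevenue} to rewrite $\cb(v)$ in a clean form under $\D$, then show that both ``ingredients'' of that form are typically of order $\lln$. Since $n^I=1$ we add a dummy image with value $0$, so $v^I_2=0$; likewise the padding text ads give $v^T_{k+1}=0$. Plugging this into Lemma \ref{lem:ita:corerevenue} collapses both branches to
\[
\cb(v) \;=\; \min\!\left(v^I_1,\; S\right), \qquad S := \sum_{i=1}^{k} v^T_i,
\]
and crucially $v^I_1$ and $S$ are independent under $\D$.

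Next I would analyze $S$. Each $v^T_i$ is uniform on $\{1/1,\ldots,1/k\}$, so $\E[S]=\ncH$ and $\Var(S) \le k \cdot \E[(v^T_i)^2] = \sum_{j=1}^k 1/j^2 \le 2$. By Chebyshev,
\[
\Prob\!\left[\,S < \tfrac{1}{2}\ncH\,\right] \;\le\; \frac{4\Var(S)}{\ncH^{\,2}} \;=\; O\!\left(\frac{1}{(\ln k)^2}\right),
\]
so for all sufficiently large $k$, $\Prob[S \ge \tfrac12 \ncH] \ge \tfrac12$.

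Now I would compute $\E[\min(v^I_1,\tfrac12\ncH)]$ directly from the definition of $\D$. With $H=\cH$, the image value is $H/j$ with probability $1/H$ for $j=1,\dots,H$. Since $H/j \le \tfrac12 \ncH$ iff $j \ge 2H/\ncH$, and because $H/\ncH \in [1,1+o(1)]$, all indices $j \ge 3$ (say) satisfy this, so
\[
\E\!\left[\min\!\left(v^I_1,\tfrac12\ncH\right)\right] \;\ge\; \frac{1}{H}\sum_{j=3}^{H} \frac{H}{j} \;=\; \ncH_H - \tfrac32 \;=\; \Omega(\ln H) \;=\; \Omega(\lln).
\]
Combining these via independence,
\[
\E_{v \sim \D}[\cb(v)] \;\ge\; \E\!\left[\min(v^I_1,\tfrac12 \ncH) \,\middle|\, S \ge \tfrac12 \ncH\right] \cdot \Prob[S \ge \tfrac12 \ncH] \;=\; \Omega(\lln) \cdot \tfrac12 \;=\; \Omega(\lln),
\]
using that $v^I_1$ is independent of the event $\{S \ge \tfrac12\ncH\}$.

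The main (minor) obstacle is the low-probability tail where $S$ is small: I want to lower bound a $\min$, so I cannot simply drop this event without some care. Chebyshev is crude but more than enough here since the target lower bound $\lln$ is much smaller than the mean $\ncH \sim \ln k$, which is why a constant-factor concentration of $S$ suffices and no sharper Chernoff-style bound is needed.
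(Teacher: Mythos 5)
Your proof is correct and follows essentially the same route as the paper's: reduce $\cb(v)$ to $\min\bigl(\sum_i v_i^T,\, v^I_1\bigr)$ via Lemma \ref{lem:ita:corerevenue}, apply Chebyshev with $\Var\bigl(\sum_i v_i^T\bigr)\le 2$ to pin the text sum near $\ncH$, and then lower bound the expectation over the image value by a harmonic sum of length $H$, giving $\Omega(\ln H)=\Omega(\lln)$. The only differences are cosmetic (you condition on $S\ge \tfrac12\ncH$ rather than $S\ge \ncH-2$, and you spell out the independence and the dummy-bidder padding that the paper leaves implicit), so no further changes are needed.
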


\begin{proof} Throughout this proof, let $v$ be a random variable drawn from
  $\D$.  For any given text ad, $\E[v_i^T] = \ncH / k$. Now, we
  bounds its variance by: $$\Var[v_i^T] = \E[(v_i^T)^2] - \E[v_i^T]^2 \leq
    \E[(v_i^T)^2]  = \frac{1}{k} \sum_{j=1}^k \frac{1}{j^2} \leq \frac{\pi^2}{6\cdot
    k} \leq \frac{2}{k}.$$ Therefore, $\E[\sum_i v_i^T] = \cH$ and $\Var[\sum_i
  v_i^T] \leq 2$. By Chebyshev's inequality $$\Prob \left( \left\vert \sum_i
    v_i^T - \ncH
  \right\vert \geq 2 \right) \leq \frac{1}{2}.$$

  By Lemma \ref{lem:ita:corerevenue} we know that the $\cb(v) = \min \{ \sum_i
v_i^T, v^I \}$. Now, we are ready to lower bound the core revenue benchmark:
$$\begin{aligned}
\E[\cb(v)] & = \E\left[ \min\left(\textstyle\sum_i v_i^T, v^I\right) \right] \\
 &\geq \frac{1}{2} \cdot \E\left[\min\left(H - 2, v^I \right)\right] &\text{by Chebyshev's inequality} \\
 &\geq \frac{1}{2} \cdot \frac{1}{H}  \sum_{i=1}^H \min\left(H - 2, \frac{H}{i}  \right) &\text{replacing $v^I$}\\
 &= \Omega(\log H)
\end{aligned}$$
Since $H = O(\log k)$ we get that $\E[\cb(v)] \geq \Omega(\lln)$.
\end{proof}
\comment{

In the following inequalities we lower bound the expected
value of $\cb$.  \begin{align} \notag \E[\min(X, \alpha_I)] &= \sum_{i =
1}^{\cH} \frac{1}{\cH} \min(X, \cH/i)\\ \notag &\geq \frac{1}{\cH} \sum_{i =
5}^{\cH}  \min(X, \cH/i) & \text{forgetting about first 4}\\ \notag &\geq
\frac{1}{\cH} \sum_{i = 5}^{\cH} \frac{1}{4} \cH/i & \text{by Equation
\ref{ineq:cheb}}\\ \label{ineq:tightcr} &\geq \sum_{i = 5}^{\cH} \frac{1}{4} 1/i
\end{align} Note that the above inequality implies that $\E[\min(X, \alpha_I)]$
is roughly $\ln\ln(k) - \ln(\ln(5))$ which is at least $\frac{1}{5} \lln$ for
large enough $k$.}

\begin{theorem} The core-competitive factor of $\R^*$ is at
least $\Omega( \lln)$.  \end{theorem}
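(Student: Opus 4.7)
The plan is to combine the two lemmas already established (Lemma \ref{lem:exprev} bounding the expected revenue of $\R^*$ by $2$, and Lemma \ref{lem:expcore} lower-bounding $\E_{v \sim \D}[\cb(v)]$ by $\Omega(\lln)$) via a standard Yao-style averaging argument. The key observation is that if $\R^*$ is $\alpha$-core-competitive, then the inequality $\sum_i p_i(v) \geq \alpha^{-1} \cb(v)$ holds pointwise, so taking expectations over $v \sim \D$ preserves it.

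More concretely, I would proceed as follows. First, I would recall the definition of $\alpha$-core-competitiveness and write the pointwise inequality $\text{rev}(v) := \sum_i p_i(v) \geq \alpha^{-1} \cb(F,v)$ for every profile $v$ in the support of $\D$. Second, I would take $\E_{v \sim \D}$ of both sides, using linearity of expectation, to obtain
\[
\E_{v \sim \D}[\text{rev}(v)] \geq \alpha^{-1} \E_{v \sim \D}[\cb(F,v)].
\]
Third, I would substitute the two bounds from the preceding lemmas: the left-hand side is at most $2$ by Lemma \ref{lem:exprev}, while the right-hand side is at least $\alpha^{-1} \cdot \Omega(\lln)$ by Lemma \ref{lem:expcore}. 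Rearranging yields $\alpha \geq \Omega(\lln)$, which is the desired bound.

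There is essentially no obstacle here: the entire content of the theorem has been shifted into the two preceding lemmas (the probabilistic revenue upper bound and the expected core-benchmark lower bound). The only thing to be slightly careful about is logical framing — the conclusion should be phrased as ``there exists a valuation profile $v$ in the support of $\D$ on which $\text{rev}(v) \leq \cb(F,v) / \Omega(\lln)$,'' which follows because if the pointwise ratio were better than $\Omega(\lln)$ on every profile in the support, the expectation inequality above would be violated. I would state this explicitly as a brief remark at the end so the reader sees that a single bad instance (not merely an ``on-average'' bound) is produced, which is what one needs to lower bound the worst-case core-competitive factor of $\R^*$.
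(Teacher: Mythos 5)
Your proposal is correct and matches the paper's proof, which likewise just combines Lemma \ref{lem:exprev} and Lemma \ref{lem:expcore} via the probabilistic method (averaging over $\D$) to exhibit a single bad profile. Your extra remark about extracting a worst-case instance from the expectation inequality is exactly the step the paper compresses into the phrase ``it follows from the probabilistic methods.''
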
 

\begin{proof} 
  Since $\E[\cb(v)] = \Omega(\lln)$ and $\E[\sum_i p_i(v)] =
  O(1)$, it follows from the probabilistic methods that there must be at least
  one valuation profile for which $ \cb(v) \geq \Omega(\lln) \cdot \E[\sum_i
  p_i(v)]$.
\end{proof}

\paragraph{Note on inefficient allocations:} The auctions described in this
paper implement outcomes that are often not socially optimal. Moreover, even
when more then one socially optimal allocation is available, the mechanism might
allocate to an agent that is part of no efficient allocation. This is unlike,
for example, the Micali-Valiant mechanism \cite{MicaliValiant} which always
allocates to a (random) subset of the agents allocated by the VCG mechanism.
Next we show that sometimes allocating to agents which are not allocated in any
efficient outcome is necessary in order to get core-competitiveness better
then $O(\ln k)$.

\begin{theorem}
  Any mechanism for the Text-and-Image setting that only allocates for a subset
  of the agents selected by the VCG mechanism has $\Omega(\ln k)$ core competitive hardness.
\end{theorem}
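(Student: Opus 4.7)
The plan is to exhibit a distribution $\D$ over Text-and-Image profiles under which every truthful VCG-subset mechanism has expected revenue $O(1)$, while the expected core revenue is $\Omega(\ln k)$; the probabilistic method then produces a single profile witnessing the claimed $\Omega(\ln k)$ core-competitive hardness. The construction uses $k$ ``real'' text ads with values drawn i.i.d.\ from the uniform distribution on $\{1, 1/2, \ldots, 1/k\}$, one extra text ad with deterministic value $0$ (so $v^T_{k+1}=0$ is available, consistent with the assumption $n^T \geq k+1$), and two image ads with deterministic values $v^I_1 = H_k/2$ and $v^I_2 = 0$.

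First, I would lower-bound the expected core. Let $S=\sum_{i=1}^k v^T_i$; then $\E[S]=H_k$ and $\Var[S]\le 2$ because $\E[(v^T_i)^2]=(1/k)\sum_{j=1}^k 1/j^2 \le 2/k$. Chebyshev's inequality gives $\Prob[S < H_k/2] \le 8/H_k^2$. On the complementary (high-probability) event, the top-$k$ text ads are the VCG winners, so Lemma~\ref{lem:ita:corerevenue} yields $\cb(v) = \max\{k\,v^T_{k+1}, v^I_1\} = H_k/2$. Hence $\E[\cb(v)] \ge (H_k/2)(1 - 8/H_k^2) = \Omega(\ln k)$.

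Next, I would upper-bound the expected revenue by $O(1)$. The text contribution is handled exactly as in the footnote of the proof of Lemma~\ref{lem:exprev}: for each text ad $i$, truthfulness gives $\E[p_i(v)] \le \max_u u\,\Prob[v^T_i \ge u] = 1/k$, summing to at most $1$. The critical use of the VCG-subset hypothesis is the image-revenue bound: the mechanism may allocate the image only when image is a VCG winner, i.e.\ when $S < v^I_1 = H_k/2$, and the image payment is at most $v^I_1$. Therefore the expected image revenue is at most $(H_k/2)\,\Prob[S<H_k/2] \le (H_k/2)(8/H_k^2)=O(1/\ln k)$, so the total expected revenue is $O(1)$.

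Combining, $\E[\cb(v)]/\E[\sum_i p_i(v)] = \Omega(\ln k)$, so some profile in the support of $\D$ satisfies $\cb(v) \ge \Omega(\ln k)\cdot \sum_i p_i(v)$, giving the lower bound. The main obstacle, and precisely where the VCG-subset hypothesis is essential, is the image-revenue bound: an unrestricted truthful mechanism could simply post the deterministic take-it-or-leave-it price $H_k/2$ to the image and collect $\Omega(\ln k)$ revenue on every profile, defeating the argument. The VCG-subset restriction removes this loophole by forcing the image to be charged only in the Chebyshev tail event $\{S < H_k/2\}$, which shrinks its expected contribution by a factor of $\Omega(\ln^2 k)$.
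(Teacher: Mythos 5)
Your proposal is correct and follows essentially the same route as the paper's own proof: the same distribution on text values, the same image value $\H_k/2$, the same Chebyshev concentration of $\sum_i v_i^T$ around $\H_k$, and the same decomposition of revenue into a text contribution of at most $1$ and an image contribution of at most $v_1^I$ times the tail probability $O(1/\H_k^2)$. You are in fact somewhat more careful than the paper in spelling out the padding agents, the explicit Chebyshev constant, and exactly where the VCG-subset hypothesis is used.
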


\begin{proof}
  Consider $k$ text ads with $v_i^T$ drawn from the same distribution used for
  the previous lower bound and one image ad with $v_1^I = \H_k / 2$.
  Using the expectation and variance of $\sum_i v_i^T$ computed
  earlier in this section, we know by Chebyshev's inequality that $\Pr( \abs
  \sum_i v_i^T - \H_k \vert > \H_k / s ) \leq \Omega(1/\H_k^2)$. So the image
  ad is allocated with probability $O(1/\H_k^2)$.  Since the
  revenue obtained from any given text ad in expectation is at most $1/k$ (by
  Lemma \ref{lem:exprev}), the total revenue is at most $k \cdot \frac{1}{k} +
  O(\frac{1}{\H_k^2}) \cdot \H_k / 2 = O(1)$. The expected core revenue
  benchmark, however, is at least $(1-\frac{1}{\H_k^2}) \cdot \H_k / 2 =
  \Omega(\ln k)$.
\end{proof}

\bibliographystyle{alpha}
\bibliography{refs}

\end{document}